\documentclass{optica-article}

\journal{opticajournal} 

\articletype{Research Article}


\usepackage{graphicx}%
\usepackage{multirow}%
\usepackage{amsfonts}%
\usepackage{mathrsfs}%
\usepackage[title]{appendix}%
\usepackage{xcolor}%
\usepackage{textcomp}%
\usepackage{manyfoot}%
\usepackage{booktabs}%
\usepackage{algorithm}%
\usepackage{algorithmicx}%
\usepackage{algpseudocode}%
\usepackage{listings}%
\usepackage{siunitx}
\usepackage{url}


\def\HH{{\bold{ H}}}
\def\Z{{\mathbb{Z}}}
\newcommand{\bx}{\mathbf{x}}
\newcommand{\bs}{\mathbf{s}}
\newcommand{\bt}{\mathbf{t}}

\def\C{{\mathbb C}}
\def\Z{{\mathbb Z}}
\def\Y{{\mathbb Y}}
\def\R{{\mathbb R}}
\def\X{{\mathbb X}}

\def\DD{{ D}}
\def\CC{{\ C}}
\def\SS{\mathcal{S}}

\def\JJ{{I}_K}
\def\J{\mathcal{J}}
\def\K{\mathcal{K}}
\def\HH{{\mathcal H}}
\def\LL{{\mathcal L}}

\def\TT{{ T}}

\def\Re{{\mathsf{Re}}}

\newtheorem{theorem}{Theorem}%

\newenvironment{proof}{\noindent\textit{Proof:}}{\hfill$\square$}

\begin{document}

\title{{Efficient Near-Field Ptychography Reconstruction using the Hessian operator}}

\author{Marcus Carlsson\authormark{1,*}, Herwig Wendt\authormark{2}, Peter Cloetens\authormark{3}, Viktor Nikitin\authormark{4}}

\address{\authormark{1} Centre for Mathematical Sciences, Lund University, S\"olvegatan 18, 223 62 Lund, Sweden\\
\authormark{2} CNRS, IRIT, University of Toulouse, 2 rue Camichel,
31000 Toulouse, France\\
\authormark{3} ESRF-The European Synchrotron, 71 Avenue des Martyrs, 38043 Grenoble, France\\
\authormark{4} Advanced Photon Source, Argonne National Laboratory, 9700 S Cass Ave, Lemont, IL 60439, USA\\
}

\email{\authormark{*}marcus.carlsson@math.lu.se}

\begin{abstract*}
X-ray ptychography is a powerful and robust coherent imaging method providing access to the complex object and probe (illumination).
Ptychography reconstruction is typically performed using first-order methods due to their computational efficiency. Higher-order methods, while potentially more accurate, are often prohibitively expensive in terms of computation. In this study, we present a mathematical framework for reconstruction using second-order information, derived from an efficient computation of the bilinear Hessian and Hessian operator. The formulation is provided for Gaussian based models, enabling the simultaneous reconstruction of the object, probe, and object positions. Synthetic data tests, along with experimental near-field ptychography data processing, demonstrate a ten-fold reduction in computation time compared to first-order methods. The derived formulas for computing the Hessians, along with the strategies for incorporating them into optimization schemes, are well-structured and easily adaptable to various ptychography problem formulations.
\end{abstract*}

\section{Introduction}\label{sec1}

%
The increasing availability of coherent X-ray beams has significantly advanced the use of ptychography, a coherent diffraction imaging (CDI) technique that enables nano-resolution imaging without the need for traditional optical lenses. In ptychography, a partially coherent beam, referred to as the "probe", illuminates the sample that is shifted laterally through different positions, and the resulting diffraction patterns are recorded by a detector. The image is then reconstructed using computational phase retrieval methods. Depending on the experimental configuration, ptychography can be classified into near-field and far-field variants.

Near-field ptychography for nano-resolution imaging is typically performed with the sample positioned at a specific distance from an X-ray focal spot~\cite{stockmar2013near,clare2015characterization} such that a significant part of the sample is illuminated. A schematic of the setup is shown in Figure~\ref{fig:setup} a). The distances focus-to-sample and focus-to-detector define the geometrical magnification. The diffraction patterns at the detector plane are mathematically described by the squared magnitude of the Fresnel transform of the scattered (or exit) wave after propagation through the object.
In contrast, in far-field ptychography a much smaller part of the sample is illuminated by placing the sample at or near the focal spot~\cite{Rodenburg:08,rodenburg2019ptychography}, see Figure~\ref{fig:setup} b). The diffraction patterns can be approximated by the squared magnitude of the Fourier transform of the scattered (or exit) wave after propagation through the object. Due to the smaller sample illumination area in far-field ptychography, the number of scanning positions required to cover an equivalent sample area is typically much higher compared to near-field ptychography.

\begin{figure}[t]
\centering\includegraphics[width=.8\textwidth]{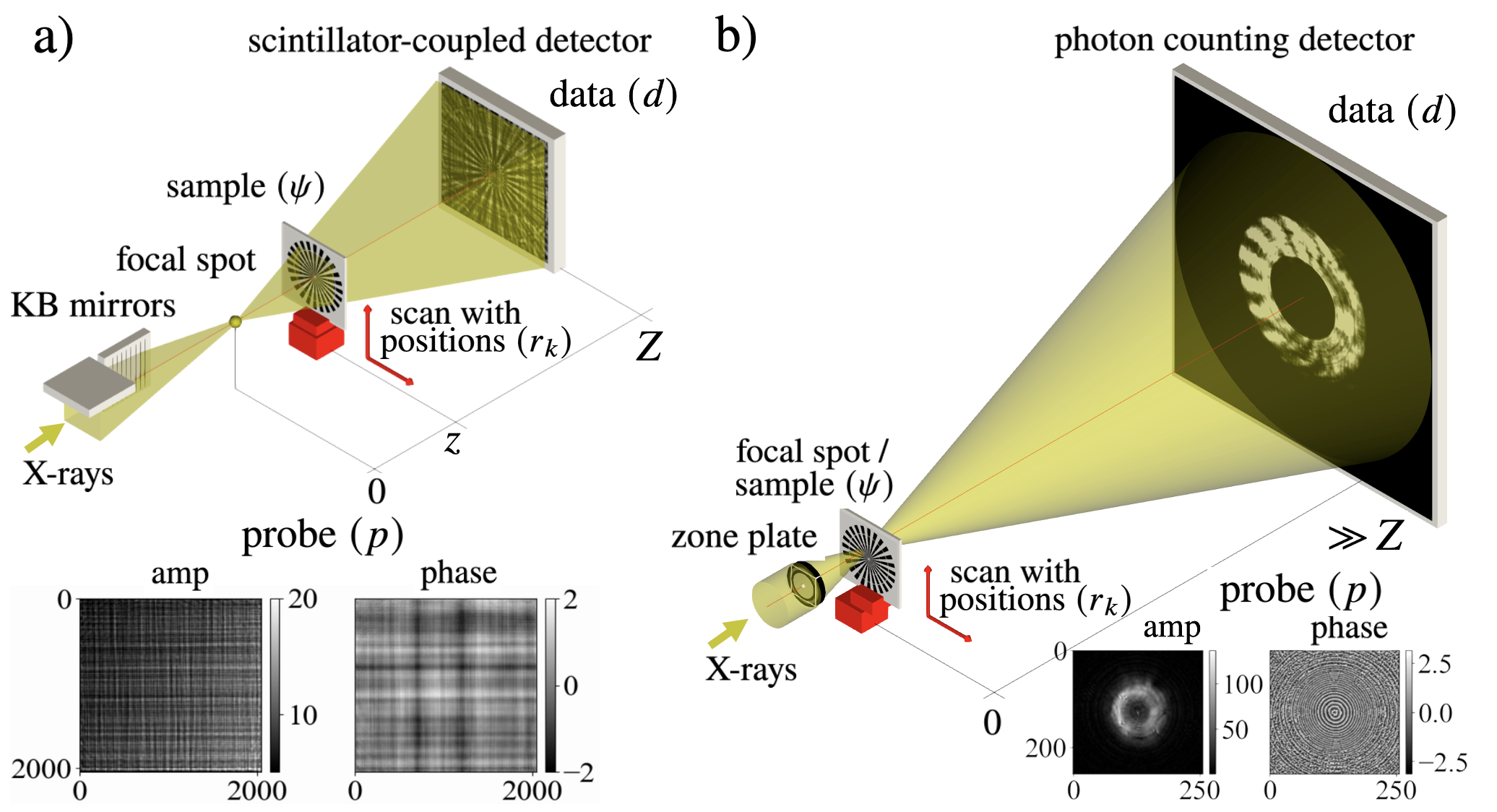}
\caption{
Experimental setups for (a) near-field ptychography, where diffraction patterns 
are recorded in the Fresnel regime and modeled using the Fresnel transform of 
the scattered/exit wave after propagation through the object, and (b) far-field 
ptychography, where diffraction patterns are recorded in the Fraunhofer regime
and modeled using the Fourier transform of the scattered/exit wave after 
propagation through the object. In both cases, beam focusing can be achieved 
using Kirkpatrick-Baez (KB) mirrors or a zone plate. The schematic setups also 
illustrate examples of the recovered probe amplitude and phase distributions.
}
\label{fig:setup}
\end{figure}
Both near-field and far-field ptychography techniques have found significant applications across a range of disciplines, including materials science, biology, and nanotechnology, enabling three-dimensional imaging at resolutions that surpass traditional microscopy~\cite{holler2017high,shahmoradian2017three,monaco2022comparison,aidukas2024high}.

In typical ptychography experiments, reconstruction is performed for both the unknown object $\psi$ and the probe $p$. Additionally, scan positions refinement is necessary to correct for potential motor errors.
Most modern ptychography algorithms rely on optimization of an objective functional $f$. While it is well known in optimization that knowledge of the second order derivatives of this functional, the so called Hessian matrix $\nabla^2f$ or $\boldsymbol{H}^f$,
can be used to speed up convergence significantly, this is usually not used in ptychography. This is partially due to the fact that the Hessian matrix is considered to have a complicated structure \cite{qian2014efficient}, but primarily because the amount of variables in $\psi$ that need to be reconstructed in a standard size modern application typically is of the order $10^4\times 10^4$. Thus, upon arranging these variables in a vector, the Hessian matrix would be of the size $10^8\times 10^8$, which is about 71 petabytes of complex numbers, each represented by two 32-bit floating point values.
This clearly makes it prohibitively slow or impossible to handle, and therefore first order methods and various types of alternating projection schemes are predominant, see \cite{chang2023fast} for a recent survey. Even for smaller scale experiments where $\psi$ is say $10^3\times 10^3$ and the probe is known, computing and storing the Hessian via standard calculus is not feasible. Using the Hessian matrix was tried, e.g.,~in \cite{yeh2015experimental}, reporting numerical results for $32\times32$-images only. 

The goal of this work is to introduce a new approach for efficiently using exact second order information in ptychography reconstruction, able to scale to large problem sizes such as those cited above. 
Our focus here is on describing the mathematical tools, rather than using this information to design or promote any particular algorithm, since ptychography problems come in a number of different settings, all with specialized algorithms designed for the particular issues arising in the given experimental setup. 
For this reason, and for the sake of brevity, we will only consider the case of near-field ptychography. However, our proposed approach can be equally well-suited for far-field ptychography, and we will explore this in future work.

To convey the key idea behind our method, assume for the moment that we have column-stacked our unknown variables so that they are represented by vectors $\bx$ in $\R^n$ (where $n$ is huge), and suppose that we
make use of an iterative reconstruction algorithm (i.e., a descent algorithm) and arrived at point $\bx_k$. 
\textit{If} we knew the gradient $\nabla f|_{\bx_k}$ and the Hessian matrix $\boldsymbol{H}^f|_{\bx_k}=\nabla^2f|_{\bx_k}$ of $f$ at $\bx_k$, we would have the following quadratic approximation 
\begin{equation}\label{secprx}
  f(\bx_k+\alpha \mathbf{s}_k)\approx f(\bx_k)+\alpha \langle \nabla f|_{\bx_k},  \mathbf{s}_k\rangle+\frac{\alpha^2}{2}\langle \boldsymbol{H}^f|_{x_k} \mathbf{s}_k,\mathbf{s}_k \rangle
\end{equation}
and could make use of it  in several ways in the process of determining the next point $\bx_{k+1}$:
 \begin{enumerate}
 \item An \emph{optimal step size} $\alpha_k$ known as Newton's rule could be determined given a descent direction $\mathbf{s}_k$ (e.g., the gradient $ \nabla f|_{\bx_k}$) by setting
 $
  \alpha_k= -\frac{\langle \nabla f|_{\bx_k},\mathbf{s}_k\rangle}{\langle \boldsymbol{H}^f|_{\bx_k}\mathbf{s}_k,\mathbf{s}_k \rangle}.
$
\item The \emph{conjugate gradient} (CG) method could be used with the principled conjugate directions known as Daniel's rule \cite{daniel1967conjugate}  (instead of the many heuristic rules that have been proposed to avoid computing the Hessian, see, e.g., the overview article~\cite{hager2006survey} and the recent book \cite{andrei2020nonlinear}), which requires computing the real number $\langle \boldsymbol{H}^f|_{\bx_k}\mathbf{s}_k,\mathbf{t}_k \rangle$ for two  vectors $\mathbf{s}_k$ and $\mathbf{t}_k$.
\item The \emph{Newton descent direction} $\mathbf{s}_k$ for second order (Quasi-)Newton algorithms could be obtained by (approximately) solving  for $\mathbf{s}_k$ the equation
\begin{equation}\label{eq2}
  -\nabla f|_{\bx_k}=\boldsymbol{H}^f|_{\bx_k} \mathbf{s}_k.
\end{equation}
 \end{enumerate}
 
%
In order to make use of these concepts in large scale problems, following the general ideas outlined in our theoretical paper \cite{Carlsson:25}, the first key observation is 
that none of the above expressions actually requires knowledge of the Hessian \emph{matrix}, but only computation of one real number $  \langle \boldsymbol{H}^f|_{\bx} \bs,\bt \rangle$, or one vector $\boldsymbol{H}^f|_{\bx} \bs$ in the case of the matrix-vector multiplication \eqref{eq2}. The second is that, once we have disposed of the matrix-vector multiplication, we can also dispose of the column-stacking (i.e.~the vectors) and represent data and the unknown variables \emph{in their natural domains}, i.e. 3D-tensors and matrices (2D-tensors), which further enables efficient and transparent implementation of the relevant operators in the reconstruction algorithm.
Our approach makes it possible to identify the \emph{analytic expressions} for the above quantities $\langle \boldsymbol{H}^f|_{\bx} \bs,\bt \rangle$ and $\boldsymbol{H}^f|_{\bx} \bs$ henceforth referred to as the \emph{bilinear Hessian} and the \emph{Hessian operator}, and to completely avoid the 
Wirtinger-type derivatives traditionally used for these type of computations~\cite{candes2015phase}. 
%

This work is organized as follows.  Section \ref{mat} introduces the mathematical framework for near-field ptychography reconstruction. Section \ref{sec:math} defines the bilinear Hessian and the Hessian operators as mathematical objects (Section \ref{ap1}), and we prove as a key contribution the chain rule for bilinear Hessians, allowing to decompose the computations into simple, transparent steps (Section \ref{secchainrule}). Section \ref{sec:optim} defines our proposed Gradient descent (GD), Conjugate Gradient (CG) and Quasi-Newton (QN) descent algorithms based on these operators.
Section \ref{sec:calc} derives the expressions for the gradient and Hessian for near-field ptychography.
In Section \ref{sec:results}, we present numerical results for simulated and experimental near-field ptychography data acquired at beamline ID16A of the European Synchrotron Radiation Facility (ESRF). Our proposed approach outperforms the state of the art in computational cost and accuracy. Section \ref{sec:conclusion} concludes this work and outlines future research.

Appendix I details the derivation of the expressions for blind ptychography with uncertain shift positions, and Appendix II shows how variable preconditioning for improving numerical stability can be seamlessly carried out implicitly within the proposed optimization algorithms.

\section{Mathematical framework for ptychography}\label{mat}

We are interested in reconstructing an unknown ``object'' $\psi$, which mathematically is represented as a matrix in $\C^{N\times N}$. For each scanning position $r_k=(x_k,y_k)\in \R^2$, where $k=1,\ldots,K$, a submatrix in $\C^{M\times M}$, which we denote by $\psi_{r_k}$, gets illuminated by a ``probe'' $p$ (also represented as a complex matrix $\C^{M\times M}$), and the joint wave-field $p\cdot \psi_{r_k}$ gets propagated to the detector by a diffraction operator $\DD$, such as the Fresnel transform (for near-field ptychography) or the Fourier transform (for far-field ptychography). Here `` $\cdot$ '' denotes elementwise multiplication of the matrices or tensors (a.k.a.~Hadamard multiplication). On the detector, only the intensity of the incoming field is measurable, i.e., for a given scanning position, each pixel value $(m_1,m_2)$ on the detector measures $$i_{m_1,m_2,k}=\Big|\big(\DD(p\cdot\psi_{r_k})\big)_{m_1,m_2}\Big|^2.$$
Thus data naturally comes arranged as a 3D-tensor $\R^{M\times M \times K}$, and the ptychography problem consists in retrieving the full matrix $\psi$, along with the probe $p$, from these measurements. The amplitude-based model provides more robust solutions than the intensity-based model, as concluded in \cite{yeh2015experimental,odstrvcil2018iterative}, so henceforth our ``data'' will be 
$d_{m_1,m_2,k}=\sqrt{i_{m_1,m_2,k}}$.

In summary, we wish to retrieve $(p,\psi)\in \C^{M\times M}\times\C^{N\times N}$ given data $d\in\R^{M\times M\times K}$. Eventually we will also consider the scanning positions as unknowns, but for the moment we ignore this for clarity of the exposition.
The forward model, in the absence of noise, can thus be written as 
\begin{equation}\label{forwardmodel}
  d=\Big|\DD_K\Big(\JJ(p)\cdot S_{{r}}(\psi)\Big)\Big|
\end{equation}
where $S_r$, $\JJ$ and $\DD_K$ are linear operators acting on the argument following:
\begin{itemize}
  \item $S_r$ is the operator which, for each scanning position $r_k$, ``extracts'' the illuminated part $\psi_{r_k}$ of $\psi$ and arranges these ``slices'' in a $\C^{M\times M\times K}$-tensor,
  \item $\JJ(p)$ denotes $K$ copies of $p$ arranged in the same $\C^{M\times M\times K}$-tensor,
  \item $\DD_K$ is the operator on $\C^{M\times M\times K}$ that takes each $\C^{M\times M}$-slice of the tensor and propagates it to the detector via $\DD$.\footnote{Note that when implementing in, e.g., Python, the operators $I_K$ and $D_K$ do not need to be considered explicitly since Python automatically distributes over the third variable when 3D tensors and matrices with compatible dimensions are multiplied, suggesting that they are also superfluous in the mathematical framework.
However, they are needed because the computation of the adjoint operators is also required, which involves summing operators over the third variable and needs to be coded explicitly.}
\end{itemize}  

In addition to \eqref{forwardmodel}, one needs to take into account noise. The noise is typically assumed to be Gaussian in near-field ptychography, where there is a relatively high photon count on the detector~\cite{odstrvcil2018iterative,nikitin2019photon}. 
As explained in e.g. \cite{odstrvcil2018iterative}, one wishes to minimize the negative log-likelihood of the joint probability distribution, which leads to the problem of finding the global minima of the objective functional
\begin{equation}\label{fG}
  f(p,\psi) =\sum_{\gamma \in \Gamma} w_\gamma \left(\Big|\DD_K\Big(\JJ(p)\cdot S_{{r}}(\psi)\Big)_\gamma\Big|-d_\gamma\right)^2,
\end{equation}
where $\Gamma$ denotes the index set $\{1,\ldots, N\}^2\times \{1,\ldots ,K\}$ and $w_\gamma$ is a weight which is zero at, e.g., possibly faulty detector positions, or reflects standard variance. 

\section{A general vectorization-free optimization framework}
\label{sec:math}

The functional \eqref{fG} acts on the linear space $\X=\C^{M\times M}\times \C^{N\times N}.$ On the other hand, most optimization methods are developed for $\R^n$, which is perfectly applicable in this setting since, upon ordering the elements and considering real and imaginary parts separately, we can identify $\X$ with $\R^n$ where $n=2M^2+2N^2$. This operation is usually done by column-stacking the matrices and then concatenating the resulting vectors. We will use the convention that any element $x\in\X$, when represented as a vector, is denoted in bold by $\bx$, and vice versa.
 
Thus, let us denote two arbitrary elements of the former space by $x$ and $y$ and their respective representations in $\R^n$ by $\bold{x}$ and $\bold{y}$. The \textit{bilinear Hessian} $\HH^f|_x$ of a given functional $f$, evaluated at a given point $x\in\X$, is the unique symmetric bilinear form on $\X$ such that we always have $$\HH^f|_{x}(u,v)=\langle \bold{H}^f|_{\bold{x}}\bold{u},\bold{v}\rangle, $$
where $\bold{H}^f|_\bold{x}$ is the Hessian matrix of $f$ when considered as a functional acting on $\R^n$ in the obvious way.
In \cite{Carlsson:25}, we presented a general technique for efficiently computing bilinear Hessians in large scale optimization settings such as those considered here. The core of our approach relies on Taylor expansions and Fr\'{e}chet-style derivatives in inner product spaces, extrapolating from classical works such as \cite{cartan1971differential,bauschke2017convex,absil2009optimization}. This allows us to avoid using the column-stacking operation, the classical chain rule and Wirtinger derivatives. 
Given the expression for $f$ from \eqref{fG},  we can derive expressions for $\HH^f|_x$ that do not explicitly rely on second order derivatives of $f$ and moreover are efficiently implementable. To this end, Theorem \ref{t1} presented in Section \ref{secchainrule} is also crucial.
Along the same lines, we introduced an efficiently computable forward operator $H^f|_x$ acting on $\X$, termed the \emph{Hessian operator}, with the property that $H^f_x(u)=v$ holds if and only if $\boldsymbol{H}^f|_{\bx}\bold{u}=\bold{v}$ holds. The mathematical details are outlined in the next section.

\subsection{Differential calculus in linear spaces}\label{ap1}

In this section we briefly recapitulate the main conclusions of \cite{Carlsson:25} and also establish a chain rule for computation of the bilinear Hessian.
Given two real inner product spaces $\X$ and $\Y$ and a function $\LL:\X\rightarrow \Y$, we define the Fr\'{e}chet derivative of $\LL$ at some $x_0\in\X$ (assuming that such exists) as the (unique) linear operator $\mathrm{d}\LL|_{x_0}:\X\rightarrow \Y$ such that $$\LL(x_0+\Delta x)=\LL(x_0)+\mathrm{d}\LL|_{x_0}(\Delta x)+o(\|\Delta x\|),$$   (where $o$ stands for ``little ordo''). Here and elsewhere, $\Delta x$ denotes an independent ``small'' variable in $\X$.
Similarly, we define the second order derivative as the (unique) bilinear symmetric $\Y$-valued operator such that 
\begin{equation}\label{expdef}
  \LL(x_0+\Delta x)=\LL(x_0)+\mathrm{d}\LL|_{x_0}(\Delta x)+\frac{1}{2}\mathrm{d}^2\LL|_{x_0}(\Delta x,\Delta x)+o(\|\Delta x\|^2).
\end{equation}
The fact that these objects are unique follows from Theorem 1 in \cite{Carlsson:25}, and will be used heavily in the derivations in Section \ref{sec:calc}.
Of course, if $\X=\R^n$ and $\Y=\R$, then the latter is simply the Hessian and the former becomes a linear functional and can be rewritten
$$\mathrm{d}\LL|_{x_0}(\Delta x)=\langle \nabla \LL|_{x_0},\Delta x\rangle,$$
for some element $\nabla \LL|_{x_0}\in \X$ that denotes the gradient. 
In this paper, the space $\X$ is typically a space of complex matrices or 3D-tensors. These spaces are naturally endowed with the complex inner product 
\begin{equation}\label{scalarcomplex}
\langle X,Y\rangle=\sum_{\gamma \in \Gamma}X_{\gamma}\overline{Y_{\gamma}},
\end{equation}
where $\Gamma$ denotes the index set, i.e.~$\{1,\ldots ,L\}\times \{1,\ldots ,M\}$ for $L\times M$-matrices and $\{1,\ldots ,L\}\times \{1,\ldots ,M\}\times\{1,\ldots ,N\}$ for $L\times M\times N$-tensors. However, in the proposed formulation, we can view these spaces as linear spaces over $\R$, where the new scalar product is given by \begin{equation}\label{scalarreal}
\langle X,Y\rangle_{\R}=\Re\langle X,Y\rangle,\end{equation} 
and then the techniques and results developed in this section apply all the same. Note in particular that \eqref{expdef} does not depend on whether the scalar product is real or imaginary, so the objects $\mathrm{d}\LL$ and $\mathrm{d}^2\LL$ are the same in both cases (and typically take complex numbers if $\Y$ is also a complex space).
However, if $\Y$ equals $\R$ then of course the values of $\mathrm{d}\LL$ are real and, to use this to obtain a \textit{gradient}, it is crucial to use the real scalar product. In this case, we will typically denote $\LL$ by $f$ and we define the gradient of $f$, denoted $\nabla f|_{x_0}$, as the unique element in $\X$ such that 
\begin{equation}\label{nabla}
    \mathrm{d}f|_{x_0}(\Delta x)=\langle \nabla f|_{x_0}, \Delta x\rangle_{\R}.
\end{equation}
Moreover $\mathrm{d}^2\LL$ is then called the \emph{bilinear Hessian} which we denote by $\HH^f$, and we define the \textit{Hessian operator} as the unique symmetric real linear operator $H^f:\X\rightarrow\X$ such that \begin{equation}\label{Hessianop}
\HH^f|_{x_0}(\Delta y,\Delta z)=\langle H^f|_{x_0}(\Delta y),\Delta z\rangle_\R    
\end{equation}
holds for all $\Delta y, ~\Delta z\in\X$. Armed with these definitions, we can now generalize \eqref{secprx} to a completely vector free setting as follows
\begin{equation}\label{secprxfree}
  f(x_0+ y)\approx f(x_0)+ \langle \nabla f|_{x_0},  y\rangle+\frac{1}{2} {\HH}^f|_{x_0} (y,y)
\end{equation}
Again, we refer to \cite{Carlsson:25} for a fuller discussion of these concepts and theoretical roots in the math literature. Expressions for $\nabla{f}|_{x_0}$ and ${\HH}^f|_{x_0}$ for the functional \eqref{fG} are derived in Sections \ref{sec:Objective} and \ref{man}, and the Hessian operator is derived in Section \ref{sec:HesOp}. 

The proposed approach bears equivalences with the use of Wirtinger derivatives in the sense that it generalizes the notion of gradients to functions on complex spaces. However, the proposed method relies on Fr\'echet derivatives to generalize this to multidimensional objects, which are more versatile than the standard derivatives used in Wirtinger calculus. This allows for the handling of more complicated functionals that are difficult to express using Wirtinger derivatives. Moreover, by deriving gradients through Taylor expansions rather than manipulating Wirtinger derivatives, the proposed approach offers simpler and more intuitive expressions. This generalization not only retains the familiar computational structure but also provides greater flexibility, making it particularly powerful for advanced applications in ptychography.

\subsection{The chain rule for bilinear Hessians}\label{secchainrule}

In order to keep the corresponding expressions as simple as possible and facilitate easy implementation, it is crucial to write the full objective functional as a composition of simpler functionals and then rely on a \textit{chain rule} for bilinear Hessians. 
Specifically, we now consider how to efficiently compute the above objects for composite functions $\J=\K\circ \LL$, where $\K:\Y\rightarrow \Z$ is another function between inner product spaces. 
We then have 
\begin{theorem}\label{t1}
Let $x_0\in\X$ be given and set $y_0=\LL(x_0)$. The joint derivative of $\J$ at $x_0$ is then given by 
\begin{equation}\label{e1}
    \mathrm{d}\J|_{x_0}(v)=\mathrm{d}\K|_{y_0} (\mathrm{d}\LL|_{x_0}(v)), \quad v\in \X.
\end{equation}
Moreover, the second order derivative is given by 
\begin{equation}\label{e2}
\begin{aligned}
  \mathrm{d}^2\J|_{x_0}(v,w)=\mathrm{d}^2\K|_{y_0}(\mathrm{d}\LL|_{x_0}(v),\mathrm{d}\LL|_{x_0}(w))+\mathrm{d}\K|_{y_0}(\mathrm{d}^2\LL|_{x_0}(v,w)).  
\end{aligned}
  \end{equation}
\end{theorem}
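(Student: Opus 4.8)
The plan is to derive both identities by substituting the second-order Taylor expansion \eqref{expdef} of $\LL$ at $x_0$ into the analogous expansion of $\K$ at $y_0=\LL(x_0)$, collecting everything up to order two in $\Delta x$, and then appealing to the uniqueness of such expansions (Theorem 1 in \cite{Carlsson:25}). First I would set
\[
\Delta y := \LL(x_0+\Delta x)-y_0 = \mathrm{d}\LL|_{x_0}(\Delta x) + \tfrac{1}{2}\,\mathrm{d}^2\LL|_{x_0}(\Delta x,\Delta x) + o(\|\Delta x\|^2).
\]
Since $\X,\Y,\Z$ are finite dimensional, all the linear and bilinear operators occurring here are bounded; in particular $\|\Delta y\|=O(\|\Delta x\|)$, so $o(\|\Delta y\|^2)=o(\|\Delta x\|^2)$. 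Feeding $\Delta y$ into \eqref{expdef} applied to $\K$ at $y_0$ yields
\[
\J(x_0+\Delta x)=\K(y_0)+\mathrm{d}\K|_{y_0}(\Delta y)+\tfrac{1}{2}\,\mathrm{d}^2\K|_{y_0}(\Delta y,\Delta y)+o(\|\Delta x\|^2).
\]

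The next step is to expand the two middle terms and discard everything of order higher than two. Linearity and boundedness of $\mathrm{d}\K|_{y_0}$ give
\[
\mathrm{d}\K|_{y_0}(\Delta y)=\mathrm{d}\K|_{y_0}\bigl(\mathrm{d}\LL|_{x_0}(\Delta x)\bigr)+\tfrac{1}{2}\,\mathrm{d}\K|_{y_0}\bigl(\mathrm{d}^2\LL|_{x_0}(\Delta x,\Delta x)\bigr)+o(\|\Delta x\|^2),
\]
whereas, writing $\Delta y=\mathrm{d}\LL|_{x_0}(\Delta x)+O(\|\Delta x\|^2)$ and using bilinearity, symmetry and boundedness of $\mathrm{d}^2\K|_{y_0}$,
\[
\mathrm{d}^2\K|_{y_0}(\Delta y,\Delta y)=\mathrm{d}^2\K|_{y_0}\bigl(\mathrm{d}\LL|_{x_0}(\Delta x),\mathrm{d}\LL|_{x_0}(\Delta x)\bigr)+o(\|\Delta x\|^2),
\]
since every remaining cross term carries at least one factor that is $O(\|\Delta x\|^2)$ and is therefore $O(\|\Delta x\|^3)$. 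Substituting these back, the expansion of $\J$ at $x_0$ thus has linear part $\mathrm{d}\K|_{y_0}(\mathrm{d}\LL|_{x_0}(\Delta x))$ and, as its order-two contribution, $\tfrac{1}{2}$ times $\mathrm{d}^2\K|_{y_0}(\mathrm{d}\LL|_{x_0}(\Delta x),\mathrm{d}\LL|_{x_0}(\Delta x))+\mathrm{d}\K|_{y_0}(\mathrm{d}^2\LL|_{x_0}(\Delta x,\Delta x))$, with error still $o(\|\Delta x\|^2)$.

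Finally I would check that these coefficients have the structure needed to invoke uniqueness: the linear part is a linear operator (a composition of two such), and the order-two expression, read as a function of $(v,w)$, is a symmetric bilinear form — bilinear as a sum of compositions of (bi)linear maps, and symmetric because $\mathrm{d}^2\K|_{y_0}$ and $\mathrm{d}^2\LL|_{x_0}$ are. Its diagonal value at $v=w=\Delta x$ is exactly the order-two term found above, so the uniqueness statement of Theorem 1 in \cite{Carlsson:25} forces $\mathrm{d}\J|_{x_0}$ and $\mathrm{d}^2\J|_{x_0}$ to coincide with the right-hand sides of \eqref{e1} and \eqref{e2} respectively. The only delicate point is the asymptotic bookkeeping in the substitution — making sure that nesting one expansion inside the other leaves no uncontrolled remainder — and this reduces entirely to the boundedness of $\mathrm{d}\K|_{y_0}$ and $\mathrm{d}^2\K|_{y_0}$ together with the crude estimate $\|\Delta y\|=O(\|\Delta x\|)$; everything else is routine algebra.
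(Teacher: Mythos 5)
Your proposal is correct and follows essentially the same route as the paper's proof: substitute the expansion of $\LL$ into that of $\K$ via $\Delta y=\LL(x_0+\Delta x)-\LL(x_0)$, absorb the higher-order cross terms into the $o(\|\Delta x\|^2)$ remainder, and conclude by the uniqueness of the (bi)linear operators in \eqref{expdef}. You are somewhat more explicit than the paper about the asymptotic bookkeeping (boundedness and $o(\|\Delta y\|^2)=o(\|\Delta x\|^2)$), but the argument is the same.
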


We remark that, in the case $\Z=\R$ and, denoting $\J$ by $f$ and $\K$ by $F$, the formula \eqref{e1} can be recast as 
\begin{equation}\label{nablacomp}
  \nabla f|_{x_0}=\mathrm{d}\LL|_{x_0}^*(\nabla F|_{y_0}),
\end{equation}
where $\cdot^*$ denotes the operator adjoint. To see this, note that in this case we have $$    \mathrm{d}f|_{x_0}(v)=\mathrm{d}F|_{y_0} (\mathrm{d}\LL|_{x_0}(v))=\langle \nabla F|_{y_0},\mathrm{d}\LL|_{x_0}(v)\rangle_{\R}= \langle \mathrm{d}\LL|_{x_0}^*(\nabla F|_{y_0}),v\rangle_{\R}.$$ Also note that this formula holds also if $\X$ and $\Y$ are complex spaces since, as noted after \eqref{scalarreal}, the adjoint is the same regardless of whether we use the complex or real scalar product. 

\begin{proof}
Setting $\Delta y=\LL(x_0+\Delta x)-\LL(x_0)$, we have $$\Delta y=\mathrm{d}\LL|_{x_0}(\Delta x)+\frac{1}{2}\mathrm{d}^2\LL|_{x_0}(\Delta x,\Delta x)+o(\|\Delta x\|^2).$$ By \eqref{expdef} applied to $\K$ at $y_0$ we thus get
\begin{align*}
  & \K(\LL(x_0+\Delta x))=\K(y_0)+ \mathrm{d}\K|_{y_0}(\Delta y)+\frac{1}{2}\mathrm{d}^2\K|_{y_0}(\Delta y,\Delta y)\Big\rangle+o(\|\Delta y\|^2)=\\& 
  \K(\LL(x_0))+ \mathrm{d}\K|_{y_0}\Big( \mathrm{d}\LL|_{x_0}(\Delta x)+\frac{1}{2}\mathrm{d}^2\LL|_{x_0}(\Delta x,\Delta x)\Big)+\\&\frac{1}{2}\mathrm{d}^2\K|_{y_0}\Big(\mathrm{d}\LL|_{x_0}(\Delta x),\mathrm{d}\LL|_{x_0}(\Delta x)\Big)+o(\|\Delta x\|^2),
\end{align*}
where the terms $\frac{1}{2}\mathrm{d}^2\LL|_{x_0}(\Delta x,\Delta x)$ are not present in the last line since, relying on the bilinearity of $\mathrm{d}^2\K|_{y_0}$, it is easy to see that these can be absorbed in the $o(\|\Delta x\|^2)$-term.
By the linearity of $\mathrm{d}\K|_{y_0}$ and the uniqueness of the operators in \eqref{expdef} (which is proved in Theorem 1 in \cite{Carlsson:25}), the formula \eqref{e1} is immediate, as well as the identity  
$$  \mathrm{d}^2\J|_{x_0}(\Delta x,\Delta x)=\mathrm{d}^2\K|_{y_0}(\mathrm{d}\LL|_{x_0}(\Delta x),\mathrm{d}\LL|_{x_0}(\Delta x))+\mathrm{d}\K|_{y_0}(\mathrm{d}^2\LL|_{x_0}(\Delta x,\Delta x)).$$
By this, \eqref{e2} follows by uniqueness of symmetric bilinear operators in expressions such as \eqref{expdef}, since indeed \eqref{e2} is a bilinear symmetric operator which agrees with the above expression on the diagonal $v=w=\Delta x$.
\end{proof}

\subsection{Optimization algorithms}
\label{sec:optim}

Before moving on to the concrete expressions that form the core of our contribution, let us briefly discuss how these can be used in various standard optimization methods such as Gradient Descent, Conjugate Gradient and Quasi-Newton algorithms with Newton step size for ptychography reconstruction. We term these algorithms BH-GD, BH-CG and BH-QN, respectively, where ``BH'' stands for ``bilinear Hessian''.
The expressions for $\nabla f|_{x}$, $\HH^f|_{x}$ and $H^f|_{x}$ are derived in the next Section and given in \eqref{nablaab}, \eqref{equ:biH} and Theorem \ref{t2}, respectively.

\subsubsection{Gradient descent and Newton step size}

Given a direction $s_k$, a near optimal strategy for step length for the step $x_{k+1}=x_k+\alpha_k s_k$ is to set $x_0=x_k$ and $y=\alpha s_k $ in the right hand side of \eqref{secprxfree} and minimize with respect to $\alpha$, which yields 
\begin{equation}\label{eqalpha}
               \alpha_k=-\frac{\langle \nabla f|_{x_k},s_k\rangle}{\HH^f|_{x_k}( {s_k},s_k)}.
             \end{equation}
In particular, if $s_k=-\nabla f|_{x_k}$, this gives the Gradient Descent method with Newton step-size;
\begin{equation}\label{GD}
  x_{k+1}=x_k-\frac{ \|\nabla f|_{x_k}\|^2}{\HH^f|_{x_k}(\nabla f|_{x_k},\nabla f|_{x_k})}\nabla f|_{x_k}.
\end{equation}

\subsubsection{Conjugate Gradient method}

The original CG method was developed as a fast solver to equations of the form $\bold{H} \mathbf{x}=\mathbf{d}$, where $\mathbf{H}$ is a positive definite matrix on $\R^n$ and $\mathbf{d}$ represents some measured data  \cite{hestenes1952methods}. 
It makes use of directions
$
\bold{s}_{k+1}=-\nabla f|_{\bx_{k+1}}+\beta_{k} \bold{s}_{k},
$
where $\beta_{k}$ is chosen so that $\langle \bold{s}_{k}, \bold{H} \bold{s}_{k-1}\rangle =0$, leading to the formula 
$
\beta_k=\langle \nabla f|_{\bx_{k+1}},\bold{H} \bold{s}_{k}\rangle/\langle \bold{s}_{k},\bold{H}\bold{s}_{k}\rangle.
$
For non-quadratic functionals, this expression for $\beta_k$ can be rewritten in many ways, leading to competing heuristic formulas such as  Fletcher-Reeves \cite{fletcher1964function}, Polak-Ribi\`{e}re \cite{polak1969note}, Hestenes-Stiefel \cite{hestenes1952methods}, Dai-Yuan \cite{dai1999nonlinear} and Hager-Zhang \cite{hager2005new}. 
Direct transposition of the expression for $\beta_k$ to the non-quadratic setting leads to
\begin{equation}\label{eqbeta}
\beta_k=\frac{\HH^f|_{x_{k+1}}(\nabla f|_{x_{k+1}},s_{k})}{\HH^f|_{x_{k+1}}( {s_{k}},s_{k})},
\end{equation}
which was proposed by Daniel \cite{daniel1967conjugate} but is rarely used because the computation of the Hessian is considered as a deadlock. Our expressions for the bilinear Hessian lift this deadlock, and we can define the following CG method with Newton step size:
\begin{align}
x_{k+1}&= x_k-\frac{\langle \nabla f|_{x_k},s_k\rangle}{\HH^f|_{x_k}( {s_k},s_k)} s_{k},\\
{s}_{k}&=-\nabla f|_{x_{k}}+\beta_{k-1} {s}_{k-1}.
\end{align}
\subsubsection{Quasi-Newton method}

Upon differentiating the right hand side of \eqref{secprxfree}, we see that its minimum is found at the points
\begin{equation}\label{QNeq1}
                       y=-H^f|_{x_k}^{-1}(\nabla f|_{x_k}),
\end{equation}
given that $H^f|_{x_k}^{-1}$ exists. But $H^f|_{x_k}^{-1}$ is intractable in the present context, whether or not it exists. Yet, assuming that it does, the equation can be solved approximately using, e.g., standard CG for a quadratic cost functional, see \cite[Section 3.2]{Carlsson:25} for details. This leads to an approximate Newton search direction $s_k\approx -H^f|_{x_k}^{-1}(\nabla f|_{x_k})$, and a Quasi-Newton update  $x_{k+1}=x_k+\alpha_k s_k$, with $\alpha_k$ given by \eqref{eqalpha}.

\section{Gradient, bilinear Hessian and Hessian operator}
\label{sec:calc}

\subsection{{Gradient, bilinear Hessian and Hessian operator on the detector}}
\label{sec:Objective}

We will rely on a nested scheme where the objective function $f$ in \eqref{fG} is written as a composition of a number of simpler functions, which are then composed to a full gradient and Hessian by the use of Theorem \ref{t1}. This strategy allows us to arrive at simple, interpretable and easy to implement expressions for the gradient, the bilinear Hessian and the Hessian operator.

Specifically, we introduce the functional $F:\C^{M\times M\times K}\rightarrow \R$ acting ``on the detector'',
\begin{equation}\label{FG}
  F(\Psi)=\sum_{\gamma \in \Gamma} w_\gamma\Big(|\Psi_\gamma|-d_\gamma\Big)^2,
\end{equation}
where the modulus is applied pointwise, and $\Psi=\DD_K\big[\JJ(p)\cdot S_{{r}}(\psi)\big]$ so that $$f(p, \psi)=F\Big(\DD_K\big[\JJ(p)\cdot S_{{r}}(\psi)\big]\Big).$$ Above, the meaning of $[\cdot]$ is the same as $(\cdot)$, and is introduced in order to make it easier to see which parentheses form pairs. 

As detailed in Section \ref{ap1}, $\C^{M\times M\times K}$ is a complex inner product space endowed with the usual inner product \eqref{scalarcomplex}, but we shall treat it as a linear space over $\R$ by introducing the real inner product \eqref{scalarreal}, which is crucial in order to be able to define gradients of the functional $f$.

To compute the gradient and bilinear Hessian of $F$, following the steps detailed in Procedure 1 from \cite{Carlsson:25}, we first use the Taylor expansion to derive 
\begin{equation}\label{modz}
    {|z_0+ \Delta z|}=|z_0|+\frac{\Re (\overline{z_0} \Delta z)}{|z_0|}+\frac{1}{2}\frac{  |\Delta z|^2}{|z_0|}-\frac{1}{2}\frac{\big[\Re (\overline{z_0} \Delta z)\big]^2}{|z_0|^3}+\mathcal{O}(| \Delta z|^3),
\end{equation}
where $z_0,~ \Delta z\in\C$ and $\mathcal{O}$ denotes ``big Ordo''. We then insert this in the expression \eqref{FG} to obtain
\begin{align*}
&F(\Psi_0+\Delta \Psi)=\\&\left\|\sqrt{w}\cdot\Big[|\Psi_0|-d+\Re \left(\frac{\overline{\Psi_0} \cdot {\Delta\Psi}}{|\Psi_0|}\right)+\frac{1}{2}\frac{|{\Delta\Psi}|^2}{|\Psi_0|}-\frac{1}{2}\frac{\big[\Re (\overline{\Psi_0}\cdot {\Delta\Psi})\big]^2}{|\Psi_0|^3}+\mathcal{O}\left(\| {\Delta\Psi}\|^3\right)\Big]\right\|^2,
\end{align*}
where all operations are applied elementwise and $w$ denotes the tensor $w=(w_\gamma)_{\gamma \in \Gamma}$. 
Expanding the inner product and collecting higher-order combinations in the $\mathcal{O}(\|{\Delta\Psi}\|^3)$-term we find that 
\begin{align*}
  &F(\Psi_0+{\Delta\Psi})=\Big\|\sqrt{w}\cdot\big(|\Psi_0|-d\big)\Big\|^2+2 \Re 
   \left\langle w\cdot\big(|\Psi_0|-d\big), \frac{\overline{\Psi_0}}{|\Psi_0|} \cdot {\Delta\Psi}\right\rangle+\\
  &\qquad \left\|\sqrt{w}\cdot{\Re \left(\frac{\overline{\Psi_0}\cdot {\Delta\Psi}}{|\Psi_0|}\right)}\right\|^2+ \left\langle {w}\cdot\big(|\Psi_0|-d\big), \frac{|{\Delta\Psi}|^2}{|\Psi_0|}-\frac{\big[\Re (\overline{\Psi_0}\cdot {\Delta\Psi})\big]^2}{|\Psi_0|^3}\right\rangle+\mathcal{O}\left(\|{\Delta\Psi}\|^3\right).
  \end{align*}
Due to some fortunate cancellations, it turns out that this can be simplified to
  \begin{align*}
&F(\Psi_0)+2 \Re 
   \left\langle {w}\cdot\big(\Psi_0-d\cdot \Psi_0/|\Psi_0|\big), {\Delta\Psi}\right\rangle+\\
  &\qquad  \left\langle w-w\cdot{d}/|\Psi_0|, {|{\Delta\Psi}|^2}\right\rangle+\left\langle  w\cdot{d}/|\Psi_0|^3,\big[\Re (\overline{\Psi_0}\cdot {\Delta\Psi})\big]^2\right\rangle+\mathcal{O}\left(\|{\Delta\Psi}\|^3\right)
\end{align*}
from which it immediately follows that 
\begin{equation}
    \nabla F|_{\Psi_0}=2 w\cdot \left(\Psi_0- d\cdot \Psi_0/|\Psi_0|\right).
\end{equation}
Similarly, by the uniqueness of the Hessian we see that $$\frac{1}{2}\HH^{F}|_{\Psi_0}({\Delta\Psi},{\Delta\Psi})=\left\langle w-w\cdot {d}/|\Psi_0|, {|{\Delta\Psi}|^2}\right\rangle+\left\langle {w}\cdot{d}/|\Psi_0|^3,\big[\Re (\overline{\Psi_0}\cdot {\Delta\Psi})\big]^2\right\rangle.$$
To obtain the Hessian as a symmetric bilinear form, one could use the polarization identity \cite[Eq. (14)]{Carlsson:25}, but by simply looking at the expression it is clear that \begin{align*}
  &\frac{1}{2}\HH^{F}|_{\Psi_0}({\Delta\Psi}^{(1)},{\Delta\Psi}^{(2)})=\left\langle w-w\cdot {d}/|\Psi_0|, \Re({{\Delta\Psi^{(1)}}}\cdot\overline{{\Delta\Psi^{(2)}}})\right\rangle+ \\&\qquad\left\langle w\cdot {d}/|\Psi_0|^3,\Re (\overline{\Psi_0}\cdot {\Delta\Psi}^{(1)})\cdot \Re (\overline{\Psi_0}\cdot {\Delta\Psi}^{(2)})\right\rangle
     \end{align*} 
is a symmetric bilinear form which coincides with the above on the ``diagonal'' $\Delta \Psi^{(1)}=\Delta \Psi^{(2)}=\Delta \Psi$, so by the uniqueness of such forms, this must be the sought expression \cite[Theorem 1]{Carlsson:25}. 

To arrive at the corresponding Hessian operator $H^{F}|_{\Psi_0}$, which acts on $\C^{M\times M \times K}$, we first note that the bilinear Hessian can be written
\begin{align}\label{HbiFG}
  \nonumber&\frac{1}{2}\HH^{F}|_{\Psi_0}\left({\Delta\Psi}^{(1)},{\Delta\Psi}^{(2)}\right)=\Re \left\langle (w-w\cdot {d}/|\Psi_0|)\cdot{{\Delta\Psi^{(1)}}},\Delta\Psi^{(2)}\right\rangle+ \\&\qquad\Re\left\langle w\cdot {d}\cdot \Psi_0/|\Psi_0|^3\cdot\Re (\overline{\Psi_0}\cdot {\Delta\Psi}^{(1)}),  {\Delta\Psi}^{(2)}\right\rangle
     \end{align} 
from which it immediately follows that 
\begin{equation}\label{HFG}
   H^{F}|_{\Psi_0}(\Delta \Psi)=2\left(w-w\cdot \frac{d}{|\Psi_0|}\right)\cdot{{\Delta\Psi}}+ 2w\cdot {d}\cdot \frac{\Psi_0}{|\Psi_0|^3}\cdot\Re \left(\overline{\Psi_0}\cdot {\Delta\Psi}\right). \end{equation}

\subsection{ The gradient and the bilinear Hessian for $f$}\label{man}

In this section we return to the functional $f$ as defined in \eqref{fG}, and derive the bilinear Hessian along with all gradients.
In order to do this using the abstract framework from Section \ref{ap1}, we note that $(p,\psi)$ naturally become variables in the space $\X=\C^{M\times M}\oplus\C^{N\times N}$, where $\oplus$ denotes the direct sum, i.e.~the linear space $\C^{M\times M}\times\C^{N\times N}$ endowed with the ``natural'' scalar product $$\langle (p_1,\psi_1),(p_2,\psi_2) \rangle_\X=\langle p_1,p_2\rangle_{\C^{M\times M}}+\langle \psi_1,\psi_2\rangle_{\C^{N\times N}}.$$
This is precisely the scalar product one would get if we were to column-stack and concatenate 
$(p,\psi)$ and identify it with an element of $\C^n$ with $n=M^2+N^2$. One could also consider the positions $r$ as variables, and in practice this is necessary for good reconstructions, but for simplicity of exposition we add this layer of complexity in Appendix I.

To clarify computations let us introduce the new variables $a=I_K(p)$ and $b=S_r(\psi)$.
We introduce the auxiliary function $\LL:(\C^{M\times M\times K})^2\rightarrow \C^{M\times M\times K}$ defined as $\LL(a,b)=D_K(a\cdot b)$ so that 
\begin{equation}\label{tildef}
     f(p,\psi)=F\Big(\LL\big[I_K(p),S_r(\psi)\big]\Big),
\end{equation}
where $(\C^{M\times M\times K})^2$ is short for $\C^{M\times M\times K}\oplus \C^{M\times M\times K}$.
By linearity of $D_K$ it follows that $$\LL(a_0+\Delta a,b_0+\Delta b)=\LL(a_0,b_0)+D_K\big(\Delta a \cdot b_0+a_0\cdot \Delta b+\Delta a\cdot \Delta  b\big).$$
The two middle terms give that  $\mathrm{d}\LL|_{(a_0,b_0)}(\Delta a, \Delta b)=
D_K(\Delta a\cdot b_0)+ D_K(a_0\cdot \Delta b)
$
whereas the latter yields
\begin{align*}
 & \mathrm{d}^2\LL|_{(a_0,b_0)}\big((\Delta a, \Delta b),(\Delta a, \Delta b)\big)=2 D_K( \Delta a\cdot \Delta b)
\end{align*}
where the factor 2 comes from the fact that $\mathrm{d}^2\LL$ in \eqref{expdef} has a factor $\frac{1}{2}$ in front of it.
By the uniqueness part of Theorem 1 in \cite{Carlsson:25}, the bilinear version of $\mathrm{d}^2\LL$ must thus be 
\begin{align*}
 & \mathrm{d}^2\LL|_{(a_0,b_0)}\big((\Delta a^{(1)}, \Delta b^{(1)}),(\Delta a^{(2)}, \Delta b^{(2)})\big)=D_K(\Delta a^{(1)}\cdot \Delta b^{(2)}+\Delta a^{(2)}\cdot \Delta b^{(1)})
\end{align*}
since this is a symmetric bilinear form which coincides with the former expression on the ``diagonal'' $\Delta a^{(1)}=\Delta a^{(2)}$, $\Delta b^{(1)}= \Delta b^{(2)}$.

Finally, we want to compose everything to get the gradient and bilinear Hessian for $f$. Setting $a_0=I_K(p_0)$, $\Delta a=I_K(\Delta p)$, $b_0=S_r(\psi_0)$ and $\Delta b=S_r(\Delta \psi)$ we get, by formulas \eqref{nabla} and \eqref{e1}, that $\Re \langle \nabla  f|_{(p_0,\psi_0)}, (\Delta p,\Delta \psi)\rangle_{\C^{M\times M}\oplus \C^{N\times N}}$ equals
\begin{align*}
  &\Re \Big\langle \nabla F|_{\LL(a_0,b_0)},\mathrm{d}\LL|_{(a_0,b_0)}\big[I_K(\Delta p),S_r(\Delta \psi)\big]\Big\rangle_{\C^{M\times M\times K}}=\\&\Re \Big\langle {\DD_K}^*(\nabla F|_{\LL(a_0,b_0)}), I_K(\Delta p)\cdot b_0+ a_0\cdot S_r(\Delta \psi)\Big\rangle_{\C^{M\times M\times K}},
\end{align*}
where we explicitly indicate which scalar product is referred to for extra clarity. Here, $\DD_K^*$ denotes the operator adjoint of ${\DD_K}$ and is simply the standard adjoint $\DD^*$ (i.e.~the back-propagation the detector to sample plane) applied individually to each of the $K$ data slices in $\C^{M\times M \times K}$. To shorten notation, we introduce \begin{equation}\label{Phi}
                                                    \Phi_0=\DD_K^*(\nabla F|_{\LL(a_0,b_0)}).
                                                                               \end{equation}
Summing up we see that \begin{align*}
  &\Re \langle \nabla  f|_{(p_0,\psi_0)}, (\Delta p,\Delta \psi)\rangle_{\C^{M\times M}\oplus \C^{N\times N}}=\\
  &\Re \big\langle I_K^*(\overline{b_0}\cdot\Phi_0),\Delta p\big\rangle_{\C^{M\times M}}+\Re \big\langle S_r^*(\overline{a_0}\cdot \Phi_0),\Delta \psi\big\rangle_{\C^{N\times N}}=\\&   \Re \left\langle \Big(I_K^*\big[\overline{S_r(\psi_0)}\cdot \Phi_0\big],S_r^*\big[\overline{I_K(p_0)}\cdot \Phi_0\big]\Big),(\Delta p,\Delta \psi)\right\rangle_{\C^{M\times M}\oplus \C^{N\times N}}
\end{align*}
where $I_K^*$, as noted in Section \ref{mat}, is simply an operator summing over the third index.
By definition \eqref{nabla} (with $\X=\C^{M\times M}\oplus \C^{N\times N}$), we see that
\begin{equation}\label{nablaab}
    \nabla f|_{(p_0,\psi_0)}= \Big(I_K^*\big[\overline{S_r(\psi_0)}\cdot\Phi_0\big],S_r^*\big[\overline{I_K(p_0)}\cdot\Phi_0\big]\Big)
\end{equation}
or equivalently that    $ \nabla_{p} f|_{(p_0,\psi_0)}= I_K^*\big[\overline{S_r(\psi_0)}\cdot\Phi_0\big]
   $ and $\nabla_{\psi}  f|_{(p_0,\psi_0)}=S_r^*\big[\overline{I_K(p_0)}\cdot\Phi_0\big].$
Turning now to the Hessian as a bilinear form, we apply \eqref{e2} to \eqref{tildef} (setting as before $a_0=I_K(p_0)$ and $b_0=S_r(\psi_0)$) to obtain 
\begin{multline}\label{equ:biH}
\HH^{ f}|_{ (p_0,\psi_0)}\Big((\Delta p^{(1)},\Delta \psi^{(1)}),(\Delta   p^{(2)},\Delta \psi^{(2)})\Big)=\hfill\\ 
\Re\Big\langle \nabla F|_{\LL(a_0,b_0)}, \mathrm{d}^2\LL|_{(a_0,b_0)}\Big(\big[I_K(\Delta p^{(1)}),S_r(\Delta \psi^{(1)})\big],\big[I_K(\Delta p^{(2)}),S_r(\Delta \psi^{(2)})\big]\Big)\Big\rangle +\\\HH^F|_{\LL(a_0,b_0)}\Big(\mathrm{d}\LL|_{(a_0,b_0)}\big[I_K(\Delta p^{(1)}),S_r(\Delta \psi^{(1)})\big],\mathrm{d}\LL|_{(a_0,b_0)}\big[I_K(\Delta p^{(2)}),S_r(\Delta \psi^{(2)})\big]\Big)
\end{multline}
where $\HH^F$ is given in \eqref{HbiFG}.

To wrap up, we have collected formulas for all partial derivatives and the bilinear Hessian, without explicitly differentiating a single time. The resulting expressions, when relying on the chain rule as above, are straightforward to implement and fast to execute, with all operators acting directly in matrix or tensor space. Armed with this, we can now approximate the graph of $f$ near a given point $x_0=(p_0,\psi_0)$, via $$f(x)\approx f(x_0)+\langle \nabla f|_{x_0},x-x_0\rangle +\frac{1}{2}\HH^f|_{ x_0}(x-x_0,x-x_0)$$ where $\langle \nabla f|_{x_0},x-x_0\rangle$ can be broken up into 
$$\langle \nabla f|_{x_0},x-x_0\rangle=\Re\langle \nabla_{p} f|_{x_0},p-p_0\rangle +\Re\langle \nabla_{\psi} f|_{x_0},\psi-\psi_0\rangle.$$

\subsection{ The Hessian operator for $f$}\label{sec:HesOp}

In order to implement second order solvers such as Newton's method, the bilinear Hessian is not enough  but we also need the Hessian operator. 
In this section we show how to retrieve an operator $H^f|_{(p_0,\psi_0)}:\C^{M\times M}\oplus \C^{N\times N}\rightarrow \C^{M\times M}\oplus \C^{N\times N}$ such that $$\HH^f|_{ (p_0,\psi_0)}\Big((\Delta p^{(1)},\Delta \psi^{(1)}),(\Delta p^{(2)},\Delta \psi^{(2)})\Big)=\Re\left \langle H^f|_{(p_0,\psi_0)} (\Delta p^{(1)},\Delta \psi^{(1)}),(\Delta p^{(2)},\Delta \psi^{(2)})\right\rangle. $$ 
Since the bilinear Hessian is symmetric, the operator is also symmetric. Note that, if we were to arrange all the elements of $\C^{M\times M}\oplus \C^{N\times N}$ in a column vector and separate the real and imaginary parts, thereby identifying $\C^{M\times M}\oplus \C^{N\times N}$ with $\R^{2M^2+2N^2}$, then $H^f|_{(p_0,\psi_0)}$ would really be the traditional Hessian matrix $\bold{H}^f|_{(p_0,\psi_0)}$. Also note that it naturally would split up into 4 submatrices \begin{equation}\label{gt6}
    \left(
                \begin{array}{cc}
                  \bold{H}^f_{11} & \bold{H}^f_{12} \\
                  \bold{H}^f_{21} & \bold{H}^f_{22} \\
                \end{array}
              \right)
\end{equation}
where $\bold{H}_{11}^f$ sends $p$-variables into new $p$-variables, $\bold{H}_{12}^f$ sends $\psi$-variables into $p-$variables and so on. By symmetry we would also have $\bold{H}_{21}^f={\bold{H}_{12}^f}^t$, where $t$ denotes matrix transpose. In a similar manner, we will find 4 operators $H^f_{11},H^f_{12}$ etc.~with corresponding roles, but acting on $\C^{M\times M}$ and $ \C^{N\times N}$ directly. These will be given by efficiently implementable rules and we shall have that $H^f_{21}$ is the adjoint of $H^f_{12}$ (with respect to the real scalar product on the respective spaces).

\begin{theorem}\label{t2}
The Hessian operator $H^f|_{(p_0,\psi_0)}(\Delta p,\Delta \psi)$ can be computed by the following steps: \begin{itemize}
\item Compute $\Phi_0={\DD_K}^*(\nabla F|_{\LL(p_0,\psi_0)})$ (recall \eqref{Phi}).
\item Compute $\Xi_0(\Delta p,\Delta \psi)={\DD_K}^*\Big[H^{F}|_{\LL(p_0,\psi_0)}\Big(\DD_K\big[\JJ(\Delta p)\cdot S_{{r}}(\psi_0)+ \JJ(p_0)\cdot S_{{r}}{(\Delta \psi)}\big]\Big)\Big]$
\item Compute $A_1=\JJ^*\Big(\overline{S_{{r}}({\Delta \psi})}\cdot\Phi_0+\overline{ S_{{r}}(\psi_0)}\cdot \Xi_0(\Delta p,\Delta \psi)\Big)  $
\item Compute $A_2=S_{{r}}^*\Big(\overline{\JJ(\Delta p)\cdot} \Phi_0+\overline{\JJ(p_0)}\cdot\Xi_0(\Delta p,\Delta \psi)\Big)  $
\item Set $H^f|_{(p_0,\psi_0)}=(A_1,A_2)$.
\end{itemize}
\end{theorem}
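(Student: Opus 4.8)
The plan is to start from the expression \eqref{equ:biH} for the bilinear Hessian $\HH^f|_{(p_0,\psi_0)}$ and to successively ``peel off'' the second test direction $(\Delta p^{(2)},\Delta\psi^{(2)})$ from each of its two summands, transferring every operator onto the first test direction, for which I write $(\Delta p,\Delta\psi)$ instead of $(\Delta p^{(1)},\Delta\psi^{(1)})$, by means of adjoints, until the whole form reads $\Re\big\langle(A_1,A_2),(\Delta p^{(2)},\Delta\psi^{(2)})\big\rangle_{\C^{M\times M}\oplus\C^{N\times N}}$. By the defining property \eqref{Hessianop} of the Hessian operator and the uniqueness statement in \cite[Theorem 1]{Carlsson:25}, the element $(A_1,A_2)$ occupying the first slot is then necessarily $H^f|_{(p_0,\psi_0)}(\Delta p,\Delta\psi)$, and its symmetry is automatic from the symmetry of $\HH^f$. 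Throughout, ``$\LL(p_0,\psi_0)$'' abbreviates $\LL\big(\JJ(p_0),S_r(\psi_0)\big)=\DD_K\big(\JJ(p_0)\cdot S_r(\psi_0)\big)$, which I denote $\Psi_0$, and I will use repeatedly that the adjoint of Hadamard multiplication by a complex tensor $c$ is Hadamard multiplication by $\overline c$ (for both the complex and the real inner products), that $\JJ^*$ sums over the third index, and that $S_r^*$ is the adjoint of the scatter operator $S_r$.

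First I would treat the summand of \eqref{equ:biH} containing $\mathrm{d}^2\LL$. Substituting the explicit symmetric bilinear form of $\mathrm{d}^2\LL$ derived in Section~\ref{man}, namely $\mathrm{d}^2\LL|_{(a_0,b_0)}\big((\Delta a^{(1)},\Delta b^{(1)}),(\Delta a^{(2)},\Delta b^{(2)})\big)=\DD_K\big(\Delta a^{(1)}\cdot\Delta b^{(2)}+\Delta a^{(2)}\cdot\Delta b^{(1)}\big)$ with $a_0=\JJ(p_0)$, $b_0=S_r(\psi_0)$, $\Delta a^{(i)}=\JJ(\Delta p^{(i)})$ and $\Delta b^{(i)}=S_r(\Delta\psi^{(i)})$, and then pulling $\nabla F|_{\Psi_0}$ through $\DD_K$ with its adjoint, produces precisely $\Phi_0=\DD_K^*(\nabla F|_{\Psi_0})$ of \eqref{Phi} and leaves this summand equal to $\Re\big\langle\Phi_0,\ \JJ(\Delta p)\cdot S_r(\Delta\psi^{(2)})+\JJ(\Delta p^{(2)})\cdot S_r(\Delta\psi)\big\rangle$.

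Next I would treat the second summand, which is $\HH^F$ evaluated on a pair of $\mathrm{d}\LL$-images. Inserting $\mathrm{d}\LL|_{(a_0,b_0)}\big[\JJ(\Delta p^{(i)}),S_r(\Delta\psi^{(i)})\big]=\DD_K\big[\JJ(\Delta p^{(i)})\cdot S_r(\psi_0)+\JJ(p_0)\cdot S_r(\Delta\psi^{(i)})\big]$, representing $\HH^F$ by its Hessian operator (comparison of \eqref{HbiFG} and \eqref{HFG} shows that the factor $2$ in \eqref{HFG} cancels the $\frac{1}{2}$ in \eqref{HbiFG}, so that $\HH^F|_{\Psi_0}(\zeta^{(1)},\zeta^{(2)})=\Re\langle H^F|_{\Psi_0}(\zeta^{(1)}),\zeta^{(2)}\rangle$), and then moving $H^F|_{\Psi_0}(\cdot)$ through $\DD_K$ with the adjoint, yields exactly $\Xi_0(\Delta p,\Delta\psi)$ as defined in the statement and leaves this summand equal to $\Re\big\langle\Xi_0(\Delta p,\Delta\psi),\ \JJ(\Delta p^{(2)})\cdot S_r(\psi_0)+\JJ(p_0)\cdot S_r(\Delta\psi^{(2)})\big\rangle$.

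Finally I would add the two contributions and collect the terms involving $\Delta p^{(2)}$ and those involving $\Delta\psi^{(2)}$ separately. Using the Hadamard-adjoint rule to move $\JJ(\Delta p^{(2)})$ and $S_r(\Delta\psi^{(2)})$ into the second slot, and then $\JJ^*$ and $S_r^*$ to rewrite $\langle\,\cdot\,,\JJ(\Delta p^{(2)})\rangle$ and $\langle\,\cdot\,,S_r(\Delta\psi^{(2)})\rangle$ as $\langle\JJ^*(\cdot),\Delta p^{(2)}\rangle$ and $\langle S_r^*(\cdot),\Delta\psi^{(2)}\rangle$, the terms ending in $\Delta p^{(2)}$ collapse to $A_1=\JJ^*\big(\overline{S_r(\Delta\psi)}\cdot\Phi_0+\overline{S_r(\psi_0)}\cdot\Xi_0(\Delta p,\Delta\psi)\big)$ and those ending in $\Delta\psi^{(2)}$ to $A_2=S_r^*\big(\overline{\JJ(\Delta p)}\cdot\Phi_0+\overline{\JJ(p_0)}\cdot\Xi_0(\Delta p,\Delta\psi)\big)$, which are exactly the quantities in the five bullet points, and the claim follows by uniqueness of the representing operator. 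The only genuine hazard is the bookkeeping in this last step together with the first summand: one must keep track of which Hadamard factor carries the ``argument'' superscript $(1)$ and which carries the ``test'' superscript $(2)$ --- in the first summand the two directions are \emph{crossed}, so the direction-independent $\Phi_0$ must be paired with $\JJ(\Delta p)\cdot S_r(\Delta\psi^{(2)})+\JJ(\Delta p^{(2)})\cdot S_r(\Delta\psi)$ rather than with a diagonal term --- and one must carry the conjugations produced by the Hadamard adjoints consistently. There is no conceptual obstacle, since Theorem~\ref{t1}, the formulas \eqref{nablaab}, \eqref{equ:biH} and \eqref{HFG}, and the first two derivatives of $\LL$ have already done the substantive work.
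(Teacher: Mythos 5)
Your proposal is correct and follows essentially the same route as the paper's own proof: the paper likewise expands \eqref{equ:biH} using the explicit forms of $\mathrm{d}\LL$ and $\mathrm{d}^2\LL$, pulls $\nabla F$ and $H^F$ through $\DD_K^*$ to produce $\Phi_0$ and $\Xi_0$, and then uses the Hadamard, $\JJ^*$ and $S_r^*$ adjoints to isolate the pairings with $\JJ(\Delta p^{(2)})$ and $S_r(\Delta\psi^{(2)})$, concluding via \eqref{Hessianop}. Your bookkeeping of the crossed terms in the $\mathrm{d}^2\LL$ summand and of the factor-of-two cancellation between \eqref{HbiFG} and \eqref{HFG} matches the paper exactly.
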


\begin{proof}
By combining formulas of Section \ref{sec:Objective} and \ref{man} we see that the bilinear Hessian is given by
\begin{equation*}\begin{aligned}
&\HH^{f}|_{ (p_0,\psi_0)}\Big((\Delta p^{(1)},\Delta \psi^{(1)}),(\Delta p^{(2)},\Delta \psi^{(2)})\Big)=\\&\Re \left\langle \Phi_0, \JJ(\Delta p^{(1)})\cdot S_{{r}}({\Delta \psi}^{(2)})+\JJ(\Delta p^{(2)})\cdot S_{{r}}({\Delta \psi}^{(1)})\right\rangle +\\& \Re \left\langle H^{F}|_{\LL(p_0,\psi_0)}\Big(\DD_K\big[\JJ(\Delta p^{(1)})\cdot S_{{r}}(\psi_0)+ \JJ(p_0)\cdot S_{{r}}{(\Delta \psi^{(1)})}\big]\Big),\right.\\&\quad \left.\DD_K\big[\JJ(\Delta p^{(2)})\cdot S_{{r}}(\psi_0)+ \JJ(p_0)\cdot S_{{r}}{(\Delta \psi^{(2)})}\big]\right\rangle=\\&
\Re \left\langle \Phi_0, \JJ(\Delta p^{(1)})\cdot S_{{r}}({\Delta \psi}^{(2)})+\JJ(\Delta p^{(2)})\cdot S_{{r}}({\Delta \psi}^{(1)})\right\rangle +\\& \Re\left\langle \Xi_0(\Delta p^{(1)},\Delta \psi^{(1)}),\JJ(\Delta p^{(2)})\cdot S_{{r}}(\psi_0)+ \JJ(p_0)\cdot S_{{r}}{(\Delta \psi^{(2)})}\right\rangle=\\&
\Re \left\langle \overline{S_{{r}}({\Delta \psi}^{(1)})}\cdot\Phi_0+\overline{ S_{{r}}(\psi_0)}\cdot \Xi_0(\Delta p^{(1)},\Delta \psi^{(1)}), \JJ(\Delta p^{(2)})\right\rangle+\\& \Re\left\langle \overline{\JJ(\Delta p^{(1)})\cdot} \Phi_0+\overline{\JJ(p_0)}\cdot\Xi_0(\Delta p^{(1)},\Delta \psi^{(1)}),  S_{{r}}({\Delta \psi}^{(2)})\right\rangle
\end{aligned}
\end{equation*} which combined with \eqref{Hessianop} easily yields the desired statement.
\end{proof}

We remark that if we consider the probe as fixed and only compute the Hessian operator related to $\psi$ (i.e. we insert $(0,\Delta\psi)$ above and only compute $A_2$, which is the operator corresponding to the matrix $\bold{H}^f_{22}$ had we vectorized), then the formula in Theorem \ref{t2} reduces to 
\begin{equation}\label{yeh}
A_2 = S_{{r}}^*\Big(\overline{\JJ(p_0)}\cdot{\DD_K}^*\Big[H^{F}|_{\LL(p_0,\psi_0)}\Big(\DD_K\big[ \JJ(p_0)\cdot S_{{r}}{(\Delta \psi)}\big]\Big)\Big]\Big).
\end{equation}
Analogous formulas have been computed in matrix form in \cite{yeh2015experimental}, see Appendix A. In particular, upon inserting formula \eqref{HFG} for $H^F$ in the above expression we retrieve the analogue of (42) from \cite{yeh2015experimental}.
Clearly, relying on \eqref{yeh} is much cleaner implementation-wise and many orders of magnitude faster to evaluate. Indeed, the numerical experiments in \cite{yeh2015experimental} are performed on $32\times32$ matrices, whereas our methods easily handle $2048\times2048$ images.

\section{Numerical experiments and results}
\label{sec:results}

We evaluate our approach using synthetic and experimental data reconstructions and compare the results to existing methods. Our goal is to introduce and share the new framework for utilizing second order information, not perform a detailed comparison to all analogs. There are many other reconstruction methods in ptychography, and their behavior may vary for different datasets. We defer a detailed comparison to future work.
Here we compare the convergence behavior and reconstruction results of the proposed BH-GD, BH-CG, and BH-QN methods with state-of-the-art Least Squares Maximum Likelihood (LSQML)\cite{odstrvcil2018iterative} method and Automatic Differentiation (AD) with Adam optimizer~\cite{kandel2019using}, referred to as “Adam”. LSQML is a generalized ptychography approach that optimizes the reconstruction simultaneously for the object, probe, and positions. It employs an optimized strategy to calculate the step length in the gradient descent direction for each variable, avoiding computationally expensive line search procedures. The authors have shown that LSQML outperforms ePIE~\cite{maiden2009improved} and Difference Map (DM)~\cite{thibault2008high}. 
Reconstruction using different methods was performed within the same framework, employing identical forward and adjoint Fresnel transform operators, shift operators, and other computational components. All methods were implemented in Python, with GPU acceleration. A single NVIDIA Tesla A100 was used for conducting performance tests.

\subsection{Synthetic data: object and probe retrieval}

As a synthetic data example, we generated an object based on the commonly used Siemens star sample, which is often employed at synchrotron beamlines to test spatial resolution. The phase component of the generated object is shown in the top-left corner of Figure~\ref{fig:fig_syn_rec}. The amplitude component is 30 times smaller. The object was created as a set of triangles with gaps at varying distances from the origin. To increase the complexity of the object, we randomly inserted sharp rectangles of different sizes within the star's segments. Additionally, we introduced low-frequency components in the background, as recovering these components may be more challenging in near-field ptychography. 

\begin{figure}
    \includegraphics[width=1\linewidth]{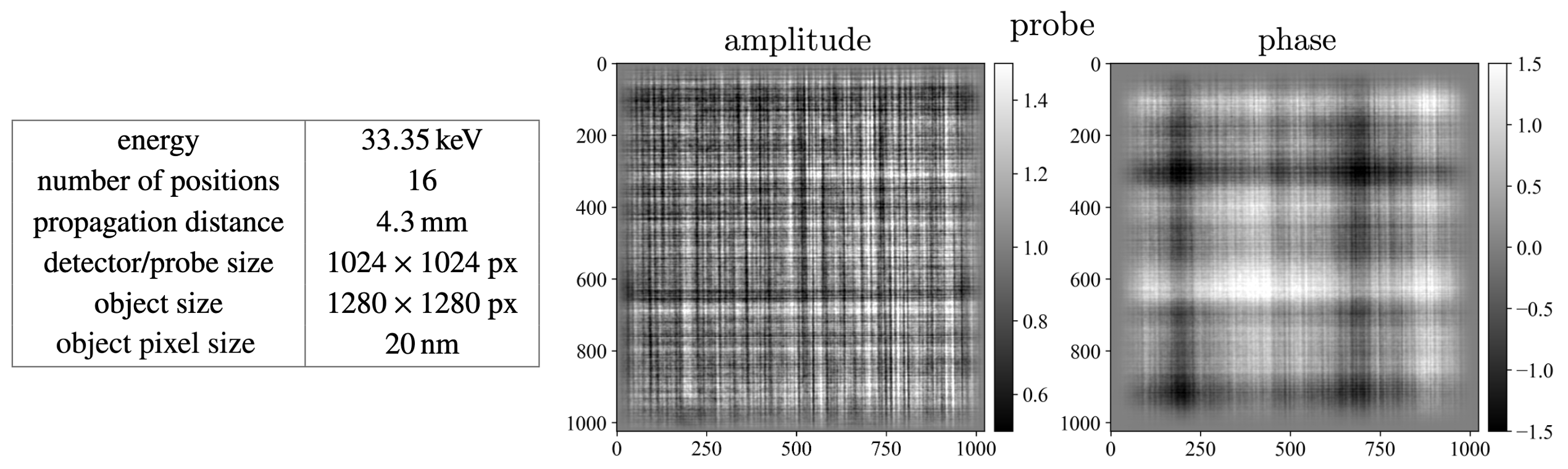}  \\
    \caption{Settings and the probe for near-field ptychography simulations.}
    \label{fig:settings_syn}
\end{figure}

The data modeling settings are shown in the left panel of Figure~\ref{fig:settings_syn}, replicating real-world parameters commonly encountered when working with experimental data. For the probe, we used a previous probe reconstruction from experimental data acquired at beamline ID16A of the ESRF. The amplitude and phase of the probe are displayed in the right panel of Figure~\ref{fig:settings_syn}. We note that the probe exhibits prominent vertical and horizontal components introduced by imperfections of the
multilayer coated KB mirrors. A strong probe modulation is required for the near-field ptychography method to work. The data were modeled for 16 object positions chosen based on the strategy described in~\cite{stockmar2013near}. Additionally, Gaussian noise with an SNR of 60 dB was added to the simulated data.

The simultaneous recovery of the object and probe was achieved using different methods. Position correction procedures were omitted as they vary across packages and are often separate steps from object and probe optimization. An initial estimate for reconstructing the sample transmittance function $\psi$ was obtained using the Transport of Intensity Equation (TIE) method~\cite{gureyev1996phase,paganin2006coherent}, commonly known as the Paganin method in synchrotron beamline applications. 
This method was applied to the data after normalization by the reference image (i.e., data acquired without a sample). It is important to note that the Paganin method generally struggles to recover high-frequency components. As a result, the presence of horizontal and vertical line artifacts in the data due to the probe shape does not significantly affect the reconstruction quality. For the initial estimate of the probe function $p$, we used the square root of the reference image  propagated back to the sample plane. The preconditioning factors for the variables $\psi$ and $p$, as described in Appendix II, were experimentally determined to be 1 and 2, respectively.

Figure~\ref{fig:fig_syn_plots} shows convergence plots for the objective functional value over 1000 iterations, using different optimization methods. In the left plot, the x-axis represents the iteration number, while in the right plot it represents computation time (in seconds) for a fair performance comparison between methods. Both plots feature insets that allow for a detailed analysis of the algorithm's behavior at the beginning, during iterations 5 to 70.

Within the illuminated region, all reconstruction methods converge to an approximation of the ground truth object, with numerical precision achieved only after as many as 20000 iterations for the slower methods. For demonstration purposes, in Figure~\ref{fig:fig_syn_rec} we compare the reconstruction results each algorithm produces after 6 seconds of execution and compared them to the ground truth. During this time, the algorithms perform different numbers of iterations; the corresponding states of the objective functional are marked by the "visualization" line in the right panel of Figure~\ref{fig:fig_syn_plots}. For instance, BH-QN completes about 20 iterations, while BH-GD completes around 120 iterations.  
For reference, the reconstructed probe after 6 seconds using the BH-CG method is shown in Figure~\ref{fig:fig_rec_probes} a).

\begin{figure}[t]
    \centering
    \includegraphics[width=1\linewidth]{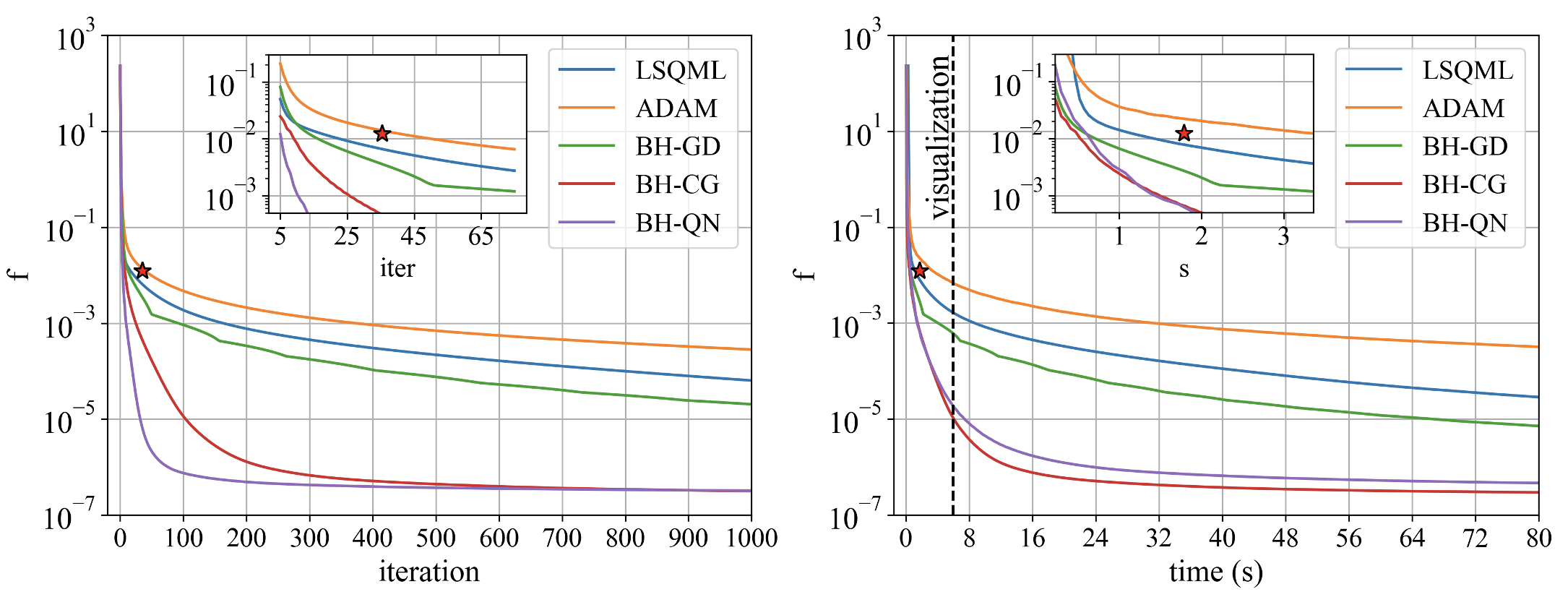}
    \caption{Objective functional value vs. iteration number (left) and vs. computation time (right) when reconstructing the object and probe from synthetic data. }
    \label{fig:fig_syn_plots}
\end{figure}

\begin{figure}[t]
    \centering
    \includegraphics[width=1\linewidth]{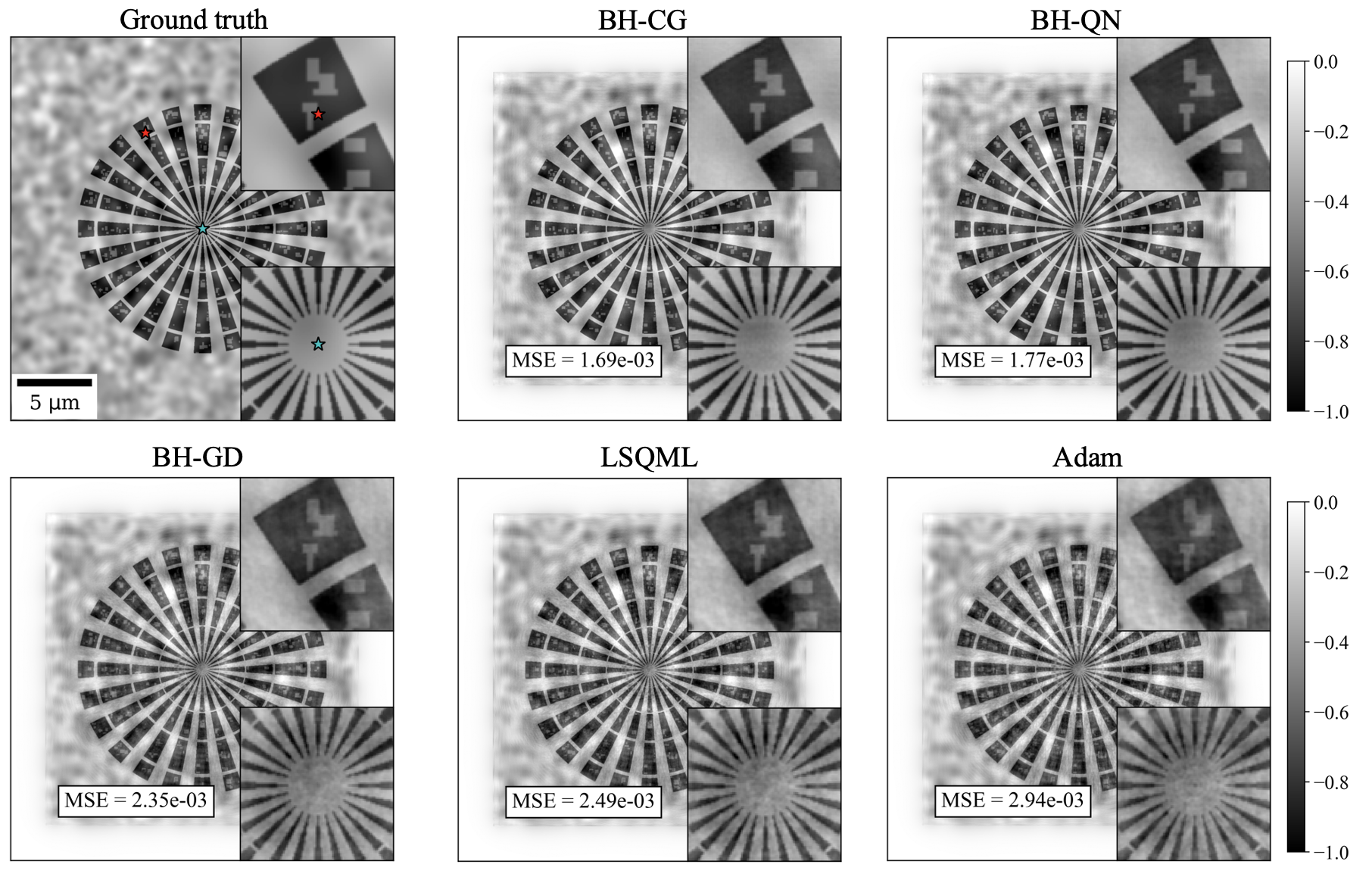}
    \caption{Reconstruction results (object phase) for a synthetic siemens star after 6 s execution of different methods. Corresponding states of the objective functional are marked with line 'visualization' in Figure~\ref{fig:fig_syn_plots}.}
    \label{fig:fig_syn_rec}
\end{figure}

Several key observations can be made from these simulations. First, the second-order methods, BH-QN and BH-CG, outperform the others by a factor of at least 8-10. The reconstruction quality after 6 seconds of iterative scheme execution is significantly better for the second-order methods, as evidenced by the inset regions in Figure~\ref{fig:fig_syn_rec}. This is further confirmed by the Mean Squared Error (MSE), displayed within the figure, which compares the illuminated region to the ground truth.

Second, although both LSQML and BH-GD use gradient descent steps, BH-GD exhibits faster convergence. This is likely because BH-GD employs a direct formula for calculating a joint step size for all variables (cf. formula~\eqref{eqalpha}), while LSQML calculates the step size approximately and independently for each variable, see~\cite{odstrvcil2018iterative} for details. 

Third, although the Adam method uses a constant gradient step length manually adjusted for optimal convergence speed, it ultimately performs slower than methods that adjust the step size for each iteration. This is evident when comparing the Adam, LSQML, and BH-GD plots in the left and right panels. This highlights the disadvantage of automatic differention methods in achieving faster convergence, particularly in more complex problems. 

Finally, it is worth comparing the two leading methods, BH-CG and BH-QN. While BH-QN requires fewer iterations to converge, BH-CG shows slightly faster performance when considering the functional vs. time plot. This is because BH-QN involves solving problem \eqref{QNeq1} iteratively. In these tests we solve the problem approximately by gradually increasing the number of inner iterations. Specifically, we set the number of inner iterations to $\text{max}(5,\lfloor \frac{k}{2} \rfloor)$, where $k$ represents the outer iteration number in the main optimization problem. Additionally, to avoid instabilities of BH-QN, we perform the first three outer iterations using BH-CG. We believe this strategy could be further optimized, for example by monitoring the residual at each inner iteration and stopping when the residual falls below a certain threshold. 
Nevertheless, since BH-QN is also more sensitive to initialization, we conclude that BH-CG is overall to be preferred in practice.
\begin{figure}[t]
    \centering
    \includegraphics[width=1\linewidth]{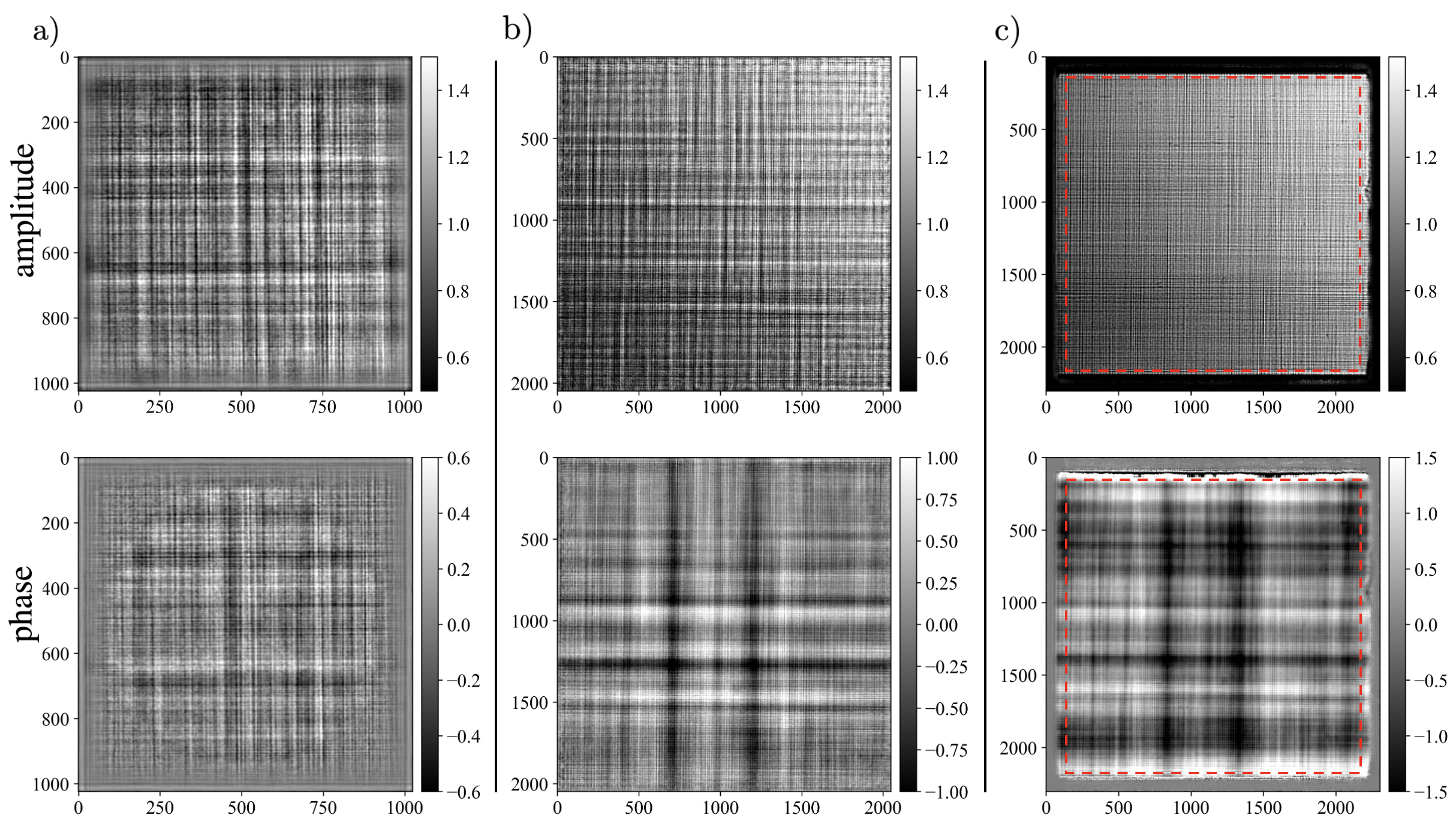}
    \caption{Probes recovered during reconstruction with BH-CG of a) synthetic data from Figure~\ref{fig:settings_syn}, b) experimental Siemens star data from Figure~\ref{fig:settings_real}, and c) experimental coded aperture data from Figure~\ref{fig:settings_real_ca}. The red dashed frame in c) outlines the detector size.}
    \label{fig:fig_rec_probes}
\end{figure}

\subsection{Experimental data: object and probe retrieval}
As the first experimental dataset, we consider measurements of a 200 nm thick gold Siemens star at beamline ID16A of the ESRF. This object is routinely used at the beamline for optics calibration and resolution tests. Acquisition parameters, along with an example of the acquired data, are shown in Figure~\ref{fig:settings_real}. The data was measured for 16 object positions chosen based on the strategy described in~\cite{stockmar2013near}. 
Unlike the synthetic tests, where the propagation distance was specified initially, here we begin with the distances defined in cone beam geometry and convert them to parallel beam geometry. This conversion ensures proper rescaling of coordinates based on the Fresnel scaling theorem~\cite{paganin2006coherent}. Further details can be found in the Appendix of ~\cite{nikitin2024x}.

\begin{figure}[t]  
    \includegraphics[width=1\linewidth]{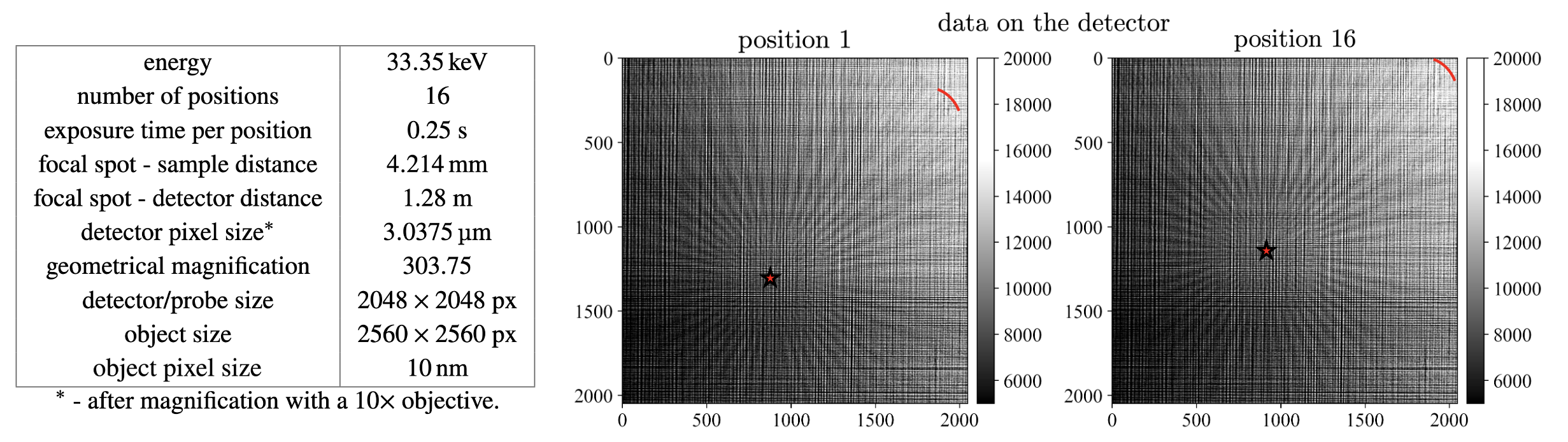}  \\
    \caption{Data acquisition parameters and examples of measured data  for near-field ptychography of the 200 nm thick gold Siemens star at beamline ID16A of ESRF.}
    \label{fig:settings_real}
\end{figure}

In this experiment, the linear stages moving the sample laterally were highly precise. Preliminary reconstruction using the BH-CG method with simultaneous position correction showed that the positional errors were less than 0.15 pixels. Consequently, we performed reconstructions with various methods without position correction, similar to the approach used for the synthetic data in the previous section. Prior to reconstruction, we also applied "zinger removal", a common pre-processing procedure including the conditional median filter (applied when the value of the pixel is significantly different from the median) to eliminate isolated bright pixels which can result from malfunctioning detector regions or parasitic scattering.

Similar to the synthetic data test, for the initial estimate of the sample transmittance function $\psi$, we used the reconstruction obtained from the Paganin method, applied to the data divided by the reference image. For the initial estimate of the probe function $p$, we used the square root of the reference image, propagated back to the sample plane.  The scaling factors for the variables $\psi$ and $p$ were chosen as 1 and 2, respectively. 

Convergence plots in Figure~\ref{fig:fig_real_plots} again demonstrate that second-order methods significantly outperform the other ones: BH-CG is more than 10 times faster than BH-GD. LSQML and BH-GD demonstrate similar convergence rate during first 200 iterations, however, iterations with the BH-GD is faster yielding better overall performance. Again, Adam is found to yield very slow convergence.

\begin{figure}[h!]
    \centering
    \includegraphics[width=1\linewidth]{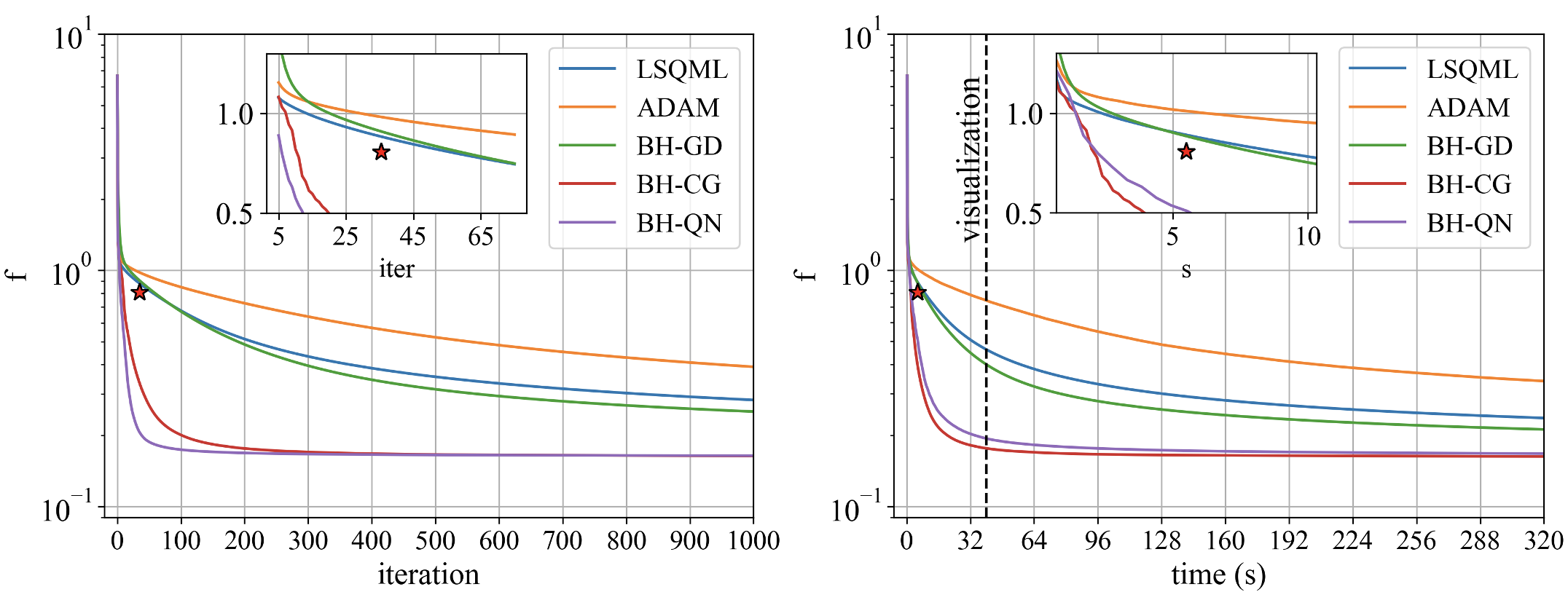}
    \caption{Objective functional value vs. iteration number (left) and vs. computation time (right) when reconstructing the object and probe from experimental Siemens star data acquired at ID16A of ESRF. }
    \label{fig:fig_real_plots}
\end{figure}

There is a difference in performance of BH-CG and BH-QN for this experimental dataset. While BH-QN converges in fewer iterations, its overall performance appears slower—roughly twice as slow as BH-CG. As seen in the right panel of Figure~\ref{fig:fig_real_plots}, BH-CG reaches the bottom of the plot in 32 seconds, whereas BH-QN reaches it in 64 seconds.

\begin{figure}[h!]
    \centering
    \includegraphics[width=1\linewidth]{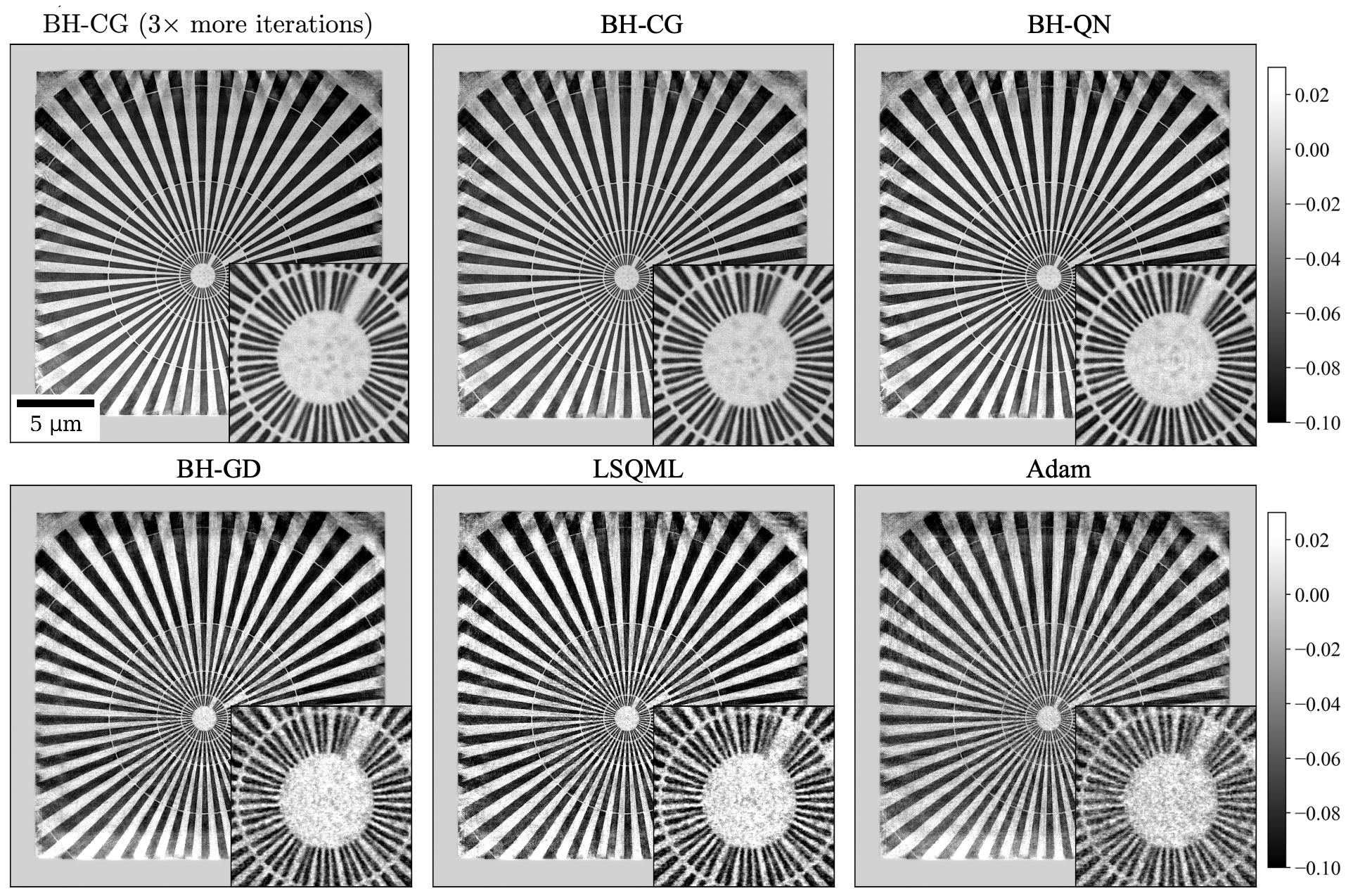}
    \caption{Reconstruction results (phase of the object) for experimental gold Siemens star data from ID16A of ESRF after 40 s execution of different methods and 120 s ($3\times $ more iteration) of BH-CG for reference. Corresponding states of the objective functional are marked with line 'visualization' in Figure~\ref{fig:fig_real_plots}.}
    \label{fig:fig_real_rec}
\end{figure}

Figure~\ref{fig:fig_real_rec} shows the reconstruction results after executing each method for 40 seconds, with an additional image showing 120 seconds ($3\times$ more iterations) of execution for BH-CG as a reference. The reconstructed probe is shown in Figure~\ref{fig:fig_rec_probes} b). Corresponding states of the objective functional are marked with line 'visualization' in Figure~\ref{fig:fig_real_plots}.

Both BH-CG and BH-QN produce high-resolution results, with the smallest features of the Siemens star clearly separable and fabrication defects in the center visible. BH-QN exhibits a noticeable difference in the middle region compared to BH-CG. This difference disappears after 60 s execution of BH-QN. In contrast, the reconstructions from BH-GD, LSQML, and Adam are noisy and far from converged, with small features not visible.

\subsection{Experimental data: object, probe, and position correction}

In this section, we demonstrate the simultaneous reconstruction of the object, probe, and position correction, as described in Appendix I, using experimental data from a coded aperture sample collected at ID16A of ESRF. The data was acquired as part of a project developing a single-distance holotomography method using coded apertures~\cite{nikitin2025single}. The coded aperture used is a binary gold mask with a 2 $\mu$m bin size and 2 $\mu$m thickness. Compared to the 200 nm gold Siemens star sample from the previous section, the coded aperture introduces larger phase shifts in the wavefront, making it more suitable for characterizing the illumination structure, i.e., for reconstructing the probe.

\begin{figure}[t]  
\centering
    \includegraphics[width=0.8\linewidth]{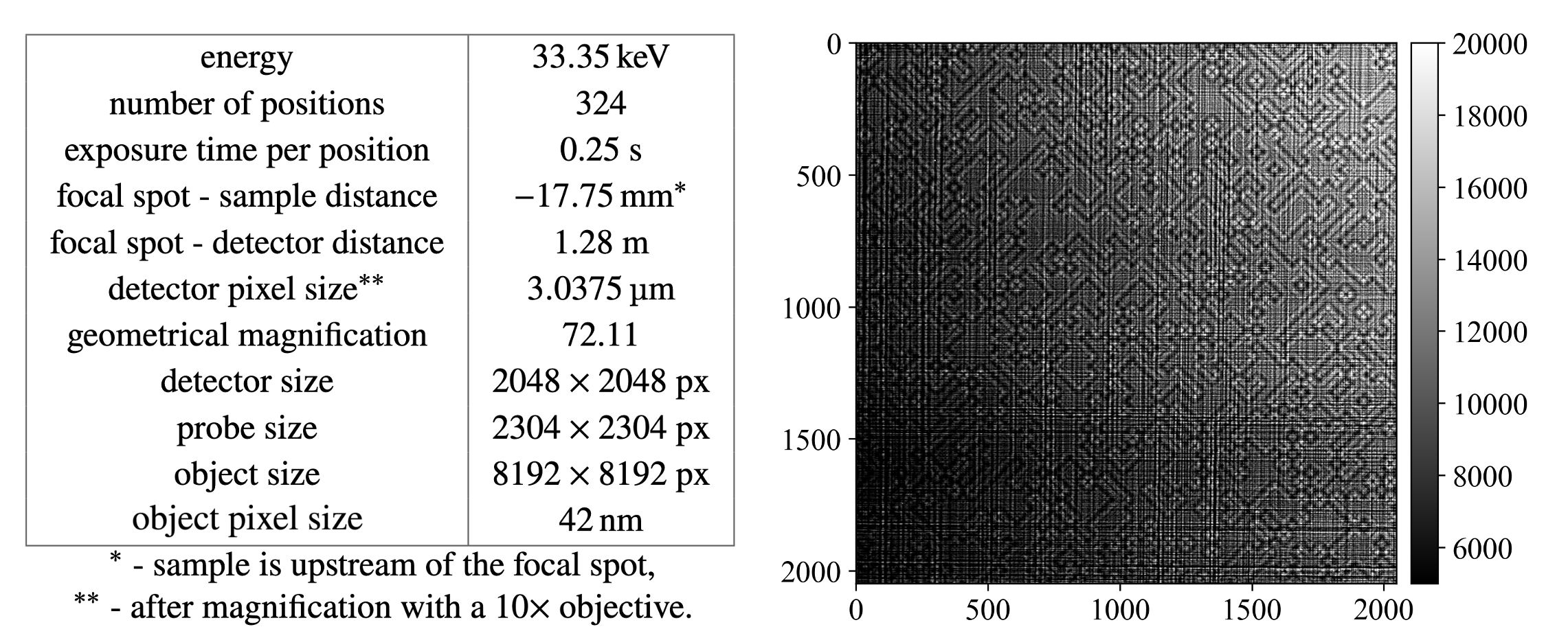}  \\
    \caption{Data acquisition parameters and an example of measured data  for near-field ptychography of the coded aperture at beamline ID16A of ESRF.}
    \label{fig:settings_real_ca}
\end{figure}

The data acquisition settings and an example of the acquired data are shown in Figure~\ref{fig:settings_real_ca}. Notably, unlike the Siemens star experiment, the coded aperture was positioned upstream of the focal spot. This configuration enhances structured illumination more effectively than placing it downstream while also preserving space for positioning the actual sample.
 While studying structured illumination is beyond the scope of this work, it is important to note that for the approach with the coded aperture to work, accurate reconstruction of both the probe and the coded aperture is essential. This was achieved using the near-field ptychography method that we study in this work. In this experiment, the coded aperture movement was not controlled by precise motors, so position refinement is necessary. 

Before reconstruction we obtained a coarse estimate for the position errors using cross-correlation of adjacent diffraction patterns. As with the Siemens star dataset, we applied zinger correction and computed the initial guess for the sample using the Paganin method, while the initial probe estimate was obtained by backpropagating the square root of the reference image.
The scaling factors for the variables $\psi,p$ and $r$ were chosen as 1,2, and 0.1, respectively.

In contrast to the Siemens star experiment, the probe reconstruction was performed on a grid larger than the detector size. The probe extension region on each side is approximately three times the size of the first Fresnel zone.
Specifically the probe size is $2304 \times 2304$, while the detector is $2048\times 2048$. The object, with a size $0.34\times 0.34$ mm, was reconstructed on a significantly larger grid, $8192 \times 8192$, using $18\times 18$ uniformly distributed positions. 

For demonstration, we perform the reconstruction using only the BH-CG method and analyze the convergence behavior with and without position correction. We do not compare this method to others, as different implementations of position correction exist across various packages, often functioning as independent supplementary steps rather than being integrated into the object and probe optimization process.

It is important to note that the entire dataset and auxiliary variables in the BH-CG method do not fit into GPU memory. As a result, we implemented data chunking and optimized data transfers between the CPU and GPU for processing. For reference, the reconstruction time for 150 iterations using the BH-CG method was approximately 1 hour.

\begin{figure}[t]
    \centering
    \includegraphics[width=1\linewidth]{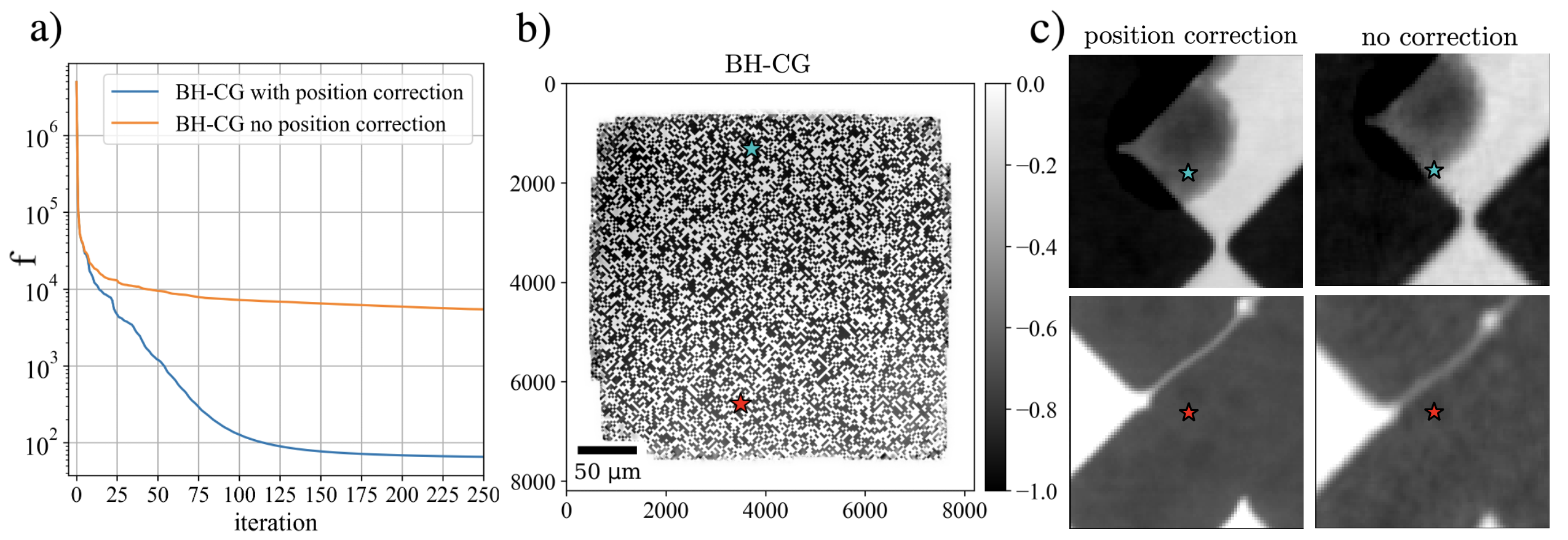}
    \caption{a) Objective functional value vs. iteration number when reconstructing the coded aperture object and probe with and without position correction from data acquired at ID16A of ESRF. b) Phase of the entire reconstructed object. c) Magnified regions indicated with stars in b) comparing reconstruction with (left) and without position correction. }
    \label{fig:fig_rec_ca}
\end{figure}

The reconstruction results for the coded aperture sample are shown in Figure~\ref{fig:fig_rec_ca}. In this figure we compare reconstructions using the proposed scheme with and without positions correction inside the BH-CG iterative scheme. The object positions for the later case were determined using cross-correlation of adjacent diffraction patterns before reconstruction. The convergence plot in the left part of the figure indicates that when the positions are also being corrected for during reconstruction, the BH-CG method minimizes the objective functional to $10^{-5}$ times its initial value within 150 iterations. In contrast, when positions are not corrected, the objective functional remains at a 100 times larger value. Reconstruction results, shown in the right part of the figure, display the full coded aperture image and zoomed-in regions containing various features. It is clear that position correction significantly enhances image sharpness, thereby improving spatial resolution. The found floating-point position errors in the horizontal and vertical directions are in the range $(-8,8)$, demonstrating the effectiveness of the proposed method compared to the approach where the positions were found as a pre-processing step. 

Finally, it is interesting to analyze the reconstructed probe in this experiment, shown in Figure~\ref{fig:fig_rec_probes} c). The probe is accurately reconstructed even outside the field of view (marked with red color in the figure). The slits that limit the beam illumination are clearly visible as black regions on either side of the probe. These slits do not appear in the measured data (right part of Figure~\ref{fig:settings_real_ca}), yet they are successfully recovered due to interference fringes appearing inside the detector field of view, but originating from features outside the field of view.

\section{Conclusion}\label{sec:conclusion}


In this work, we introduced an innovative ptychography reconstruction method that leverages the bilinear Hessian and Hessian operators to significantly accelerate the convergence and improve the accuracy of iterative schemes for refining the object, probe, and position parameters. We provided the necessary mathematical formulations to implement gradient descent, conjugate gradient, and quasi-Newton schemes. 
Deriving gradients and operators for ptychographic reconstruction can result in complex expressions, making implementation challenging. While some researchers advocate for Automatic Differentiation (AD) as a more convenient alternative \cite{jurling2014applications,kandel2019using}, it requires substantial computational resources and is often significantly slower \cite{srajer2018benchmark}. In \cite{Carlsson:25}, computing the Hessian operator with AD was found to be up to an order of magnitude slower, even for simple cases, a result confirmed by our numerical findings. Moreover, AD struggles with essential modifications like preconditioning \cite{kandel2021efficient}, failing to eliminate the need for manual computations. In contrast, our framework provides a systematic approach to deriving gradients and operators, producing transparent, easily implemented expressions through derivative-free computations and the chain rule for the bilinear Hessian. Given these advantages, we argue that manual computation using our framework is a more efficient and practical solution than AD.

As a demonstration of the effectiveness of our proposed method, we applied it to both synthetic and experimental near-field ptychography data. The proposed second-order methods outperform traditional techniques by one order of magnitude, showcasing their remarkable efficiency in handling large-scale ptychography datasets. Since the performance of the proposed conjugate gradient (BH-CG) and Quasi-Newton (BH-QN) method are very similar, we advocate for the use of BH-CG since Quasi-Newton is known to be more sensitive to initialization and more complex, requiring additional parameter tuning, whereas our proposed BH-CG is parameter free and simple to implement and to use.
The performance improvements offered by our method represent a significant advancement in the field of ptychography, where high-quality reconstructions are often computationally expensive and time-consuming. Conventional phase retrieval reconstruction in 3D ptychography, which involves capturing data at different object rotation angles, can take up to a week using existing methods. However, our proposed second-order techniques achieve the same results in less than a day, offering a substantial reduction in computational time. This improvement is particularly impactful for synchrotron beamlines and in-situ ptychography experiments, where immediate feedback is crucial for adjusting environmental conditions and optimizing sample preparation. 

Looking ahead, a key goal is adapting our method for processing experimental far-field ptychography data. While our approach is valid for this setting and also adaptable to Poisson noise, far-field ptychography presents challenges that are distinct from near-field ptychography. Further advancements will include multimode probe reconstruction for improved robustness in the presence of complex sample geometries, and orthogonal probe relaxation~\cite{odstrcil2016ptychographic} to reduce artifacts in the reconstructed images. Additionally, we plan to explore batch data processing to further accelerate convergence~\cite{odstrvcil2018iterative,tripathi2024stochastic}. 
While integrating multimode probes and orthogonal relaxation should be straightforward, batch processing with conjugate gradient methods remains largely unexplored and presents an exciting challenge and valuable addition to our toolkit.

We are also exploring the application of our method to 3D ptychography, where 2D ptychography and tomography are solved jointly. As shown in previous studies \cite{gursoy2017direct,Aslan:19, nikitin2019photon}, joint reconstruction improves image quality while reducing the required scanning positions and rotation angles. This not only lowers measurement overhead but also minimizes radiation exposure, which is crucial for preserving sensitive samples in biological imaging and materials science.

Beyond methodological challenges, handling the large datasets of 3D ptychography requires efficient computational strategies. Optimized distributed computing across multiple GPUs and nodes is essential to keep reconstruction feasible for complex experiments with massive data volumes. Additionally, enhancing memory efficiency and parallelization will be crucial for scaling to even larger datasets.

\section*{Acknowledgments}

This research used resources of the Advanced Photon Source, a U.S. Department of Energy (DOE) Office of Science user facility and is based on work supported by Laboratory Directed Research and Development (LDRD) funding from Argonne National Laboratory, provided by the Director, Office of Science, of the U.S. DOE under Contract No. DE-AC02-06CH11357. We acknowledge the European Synchrotron Radiation Facility (ESRF) for provision of synchrotron radiation facilities  under proposal number MI-1506. Work in part also supported by the Swedish Research Council Grant 2022-04917.

\section*{Disclosures}
The authors declare no conflicts of interest.

\section*{Data availability}
Data underlying the results presented in this paper are not publicly available at this time but may be obtained from the authors upon reasonable request.

\section*{Code}
Demonstration code is available under \href{https://github.com/nikitinvv/BH-ptychography/tree/paper}{https://github.com/nikitinvv/BH-ptychography/tree/paper}.

\section*{Appendix I: The bilinear Hessian for ptychography with position correction}\label{sec:positions}

In this section we modify the objective functional $f$ defined in \eqref{fG} so that also the positions $r$ are considered as unknowns. By slight abuse of notation, we will simply add $r$ as an independent variable and write
\begin{equation}\label{eqposcor}
f(p,\psi,
{r})=F\big(\LL(\JJ(p), S_{{r}}(\psi)\big)
\end{equation}
just like in \eqref{tildef}. The shift operators $S_{{r}}$ are defined in the Fourier domain, i.e., for each slice $(S_{{r}}(\psi))_k$, $k=1,\ldots,K$,  $$(S_{{r}}(\psi))_k=\CC(\mathcal{F}^{-1}(e^{ - 2\pi i \langle {r}_k, {{\xi}}\rangle}\hat{\psi}({\xi})))$$  where 
\begin{itemize}
  \item $\mathcal{F}$ denotes the (discrete) Fourier transform and we also write $\hat{\psi}$ in place of $\mathcal{F}(\psi)$,
  \item $\langle {r}_k, {{\xi}}\rangle=r_{k,1}\xi_1+r_{k,2}\xi_2$ is the scalar product in $\R^2$,
   \item $\CC$ is an operator which crops the full image $\C^{N\times N}$ to a smaller central piece $\C^{M\times M}$.
\end{itemize}  
It will be convenient to think of the above operator also as a function of ${r}_k$ so therefore we introduce $\TT_\psi:\R^2\rightarrow \C^{M\times M}$ defined by
$$\TT_\psi({z})=\CC\Big(\mathcal{F}^{-1}\big[e^{ - 2\pi i \langle {z}, {\xi}\rangle}\hat{\psi}({\xi})\big]\Big).$$
We thus have $S_r(\psi)=\big[\TT_{\psi}(r_k)\big]_{k=1}^K$; though this introduces two notations for essentially the same operation, it will significantly simplify the upcoming computations.
We first expand $\TT_\psi({z})$ in a Taylor-series with respect to a perturbation  $\Delta {z}$ of some fixed ${z}_0$. Since $$e^{-2\pi i  \langle {z}_0+\Delta {z}, {\xi}\rangle}=e^{-2\pi i \langle {z}_0, {\xi}\rangle}\left(1- 2\pi i \langle \Delta {z},{\xi}\rangle+\frac{(- 2\pi i )^2}{2}\langle \Delta {z}, {\xi}\rangle^2+\mathcal{O}\left(\|\Delta {z}\|^3\right)\right)$$ we get that 
\begin{align*}
  \TT_{\psi}({z}_0+\Delta {z})=&\CC(\mathcal{F}^{-1}\big[e^{- 2\pi i \langle {z}_0, {\xi}\rangle}\Big(1- 2\pi i  \langle\Delta {z}, {\xi}\rangle-{2}\pi^2\langle  \Delta {z},{\xi}\rangle^2+\mathcal{O}(\|\Delta {z}\|^3)\Big)\hat{\psi}({\xi})\big]).
\end{align*}
Comparing this expression with \eqref{expdef}
we see that
\begin{equation}\label{l04}
\mathrm{d}\TT_\psi|_{{{z}_0}}(\Delta {z})=-2\pi i \CC\Big(\mathcal{F}^{-1}\big[\langle\Delta{z},{\xi}\rangle e^{-2\pi i  \langle {z}_0, {\xi}\rangle}\hat{\psi}({\xi})\big]\Big)
\end{equation}
and
\begin{align*}
  &  \mathrm{d}^2{\TT_\psi}|_{{{z}_0}}(\Delta{z},\Delta{z})=-4\pi^2\CC\Big(\mathcal{F}^{-1}\big[\langle\Delta{z},{\xi}\rangle^2 e^{-2\pi i  \langle { z}_0, {\xi}\rangle}\hat{\psi}({\xi})\big]\Big).
\end{align*}
Since later we will also need the bilinear version of this, we remark already that 
\begin{equation}\label{e34}
   \mathrm{d}^2{\TT_\psi}|_{{{z}_0}}(\Delta{z}^{(1)},\Delta{z}^{(2)})=-4\pi^2\CC\Big(\mathcal{F}^{-1}\big[\langle\Delta{z}^{(1)}, {\xi}\rangle\langle\Delta{z}^{(2)}, {\xi}\rangle e^{- 2\pi i  \langle{z}_0, {\xi}\rangle}\hat{\psi}(\xi)\big]\Big),
\end{equation}
is real bilinear and symmetric, so by uniqueness of such objects it must be the one sought.

Since the operator $\TT_\psi$ is linear in the $\psi$-variable, we get that the joint expansion for a perturbation $\TT_{\psi_0+\Delta \psi}({z}_0+\Delta {z})$ considered as a function on $\C^{N\times N}\times \R^2$ becomes 
\begin{align*}
  \TT_{\psi_0+\Delta \psi}({z}_0+\Delta {z})=&\TT_{\psi_0}({z}_0+\Delta {z})+\TT_{\Delta \psi}({z}_0+\Delta {z})=
 \TT_{\psi_0}({{z}_0})+\mathrm{d}\TT_{\psi_0}|_{{{z}_0}}( \Delta {z}) +\TT_{\Delta \psi}({{z}_0})+  \\& \mathrm{d}\TT_{\Delta \psi}|_{{{z}_0}}( \Delta {z})+\frac{1}{2}\left(\mathrm{d}^2{\TT_{\psi_0}}|_{{{z}_0}}(\Delta{z},\Delta{z})\right)+\mathcal{O}\left(\|(\Delta \psi,\Delta {z})\|^3\right).
\end{align*}
Here, first order terms are collected on the first row, and the second row contains the second order terms and the ordo. 
We now consider the function $\SS:\C^{N\times N} \times (\R^2)^K\rightarrow \C^{M\times M\times K}$ defined as $$\SS(\psi,r)= S_{{r}}({\psi})=\left(\TT_{\psi}({r}_k)\right)_{k=1}^K,$$ where ${r}=({r_1},\ldots,{r}_K)$ and  ${r}_k\in\R^2$. By the linear part of the penultimate equation we have that \begin{align}
 & \mathrm{d}\SS|_{(\psi_0,r_0)}(\Delta \psi, \Delta r)=\Big(\big(\TT_{\Delta \psi}({{r}_{0,k}})+  \mathrm{d}\TT_{\psi_0}|_{{{r}_{0,k}}}( \Delta {r}_k)\big) \Big)_{k=1}^K=\nonumber S_{{r}_{0}}{(\Delta \psi)}+  \Big( \mathrm{d}\TT_{\psi_0}|_{{{r}_{0,k}}}( \Delta {r}_k) \Big)_{k=1}^K\label{DLL4}
\end{align}
and the bottom row gives 
\begin{align*}
 & \mathrm{d}^2\SS|_{(\psi_0,r_0)}\big(( \Delta \psi, \Delta r),(\Delta \psi, \Delta r)\big)=\Big( 2 \mathrm{d}\TT_{\Delta \psi}|_{{{r}_{0,k}}}( \Delta {r}_k)+ \mathrm{d}^2{\TT_{\psi_0}}|_{{{r}_{0,k}}}(\Delta{r}_k,\Delta{r}_k)\Big)_{k=1}^K
\end{align*}
where $\mathrm{d}^2{\TT_\psi}|_{{{r}_{0,k}}}$ is given in \eqref{e34}. The bilinear version is now straightforward to get
\begin{align*}
 & \mathrm{d}^2\SS|_{(\psi_0,r_0)}\big((\Delta \psi^{(1)},\Delta{r}^{(1)}),(\Delta \psi^{(2)},\Delta{r}^{(2)})\big)=\\& \Big( \mathrm{d}\TT_{\Delta \psi^{(1)}}|_{{{r}_{0,k}}}( \Delta {r}_k^{(2)})+\mathrm{d}\TT_{\Delta \psi^{(2)}}|_{{{r}_{0,k}}}( \Delta {r}_k^{(1)})+ \mathrm{d}^2{\TT_{\psi_0}}|_{{{r}_0}}(\Delta{r}_k^{(1)},\Delta{r}_k^{(2)})\Big)_{k=1}^K.
\end{align*}

Finally, we want to compose these expressions to get the gradient and bilinear Hessian for the function $f$ in \eqref{eqposcor}, which now can be written $f(p,\psi,r)=F\circ \LL(p,\SS\big(\psi,r)\big)$. 
Similar to the calculations in Section \ref{man}, we now introduce $a_0=I_K(p_0)$ and $b_0=\SS(\psi_0,r_0)$ and set $\Delta a=I_K(\Delta p)$,  but in contrast we put $$\Delta b=\mathrm{d}\SS|_{(\psi_0,r_0)}(\Delta \psi, \Delta r)+\frac{1}{2}\mathrm{d}^2\SS|_{(\psi_0,r_0)}\big[(\Delta \psi, \Delta r)^{\times 2}\big]$$ in place of $S_r(\Delta\psi)$.
To shorten formulas we also introduce the abbreviation $(\Delta a, \Delta b)^{\times 2}$ for $\big((\Delta a, \Delta b),(\Delta a, \Delta b)\big)$. Following the computations in Section \ref{man} it is easy to see that $F\circ \LL$ has the second order Taylor expansion \begin{align*}
&(F\circ \LL)(a_0+\Delta a, b_0+\Delta b)=F\circ\LL(a_0,b_0)+\Re \Big\langle \nabla F|_{\LL(a_0,b_0)}, \mathrm{d}\LL|_{(a_0,b_0)}(\Delta a, \Delta b)\Big\rangle+\\
&\frac{1}{2}\Re\Big\langle \nabla F|_{\LL(a_0,b_0)}, \mathrm{d}^2\!\LL|_{\!(a_0,b_0)}\big[\!(\Delta a, \Delta b)^{\times 2}\big]\Big\rangle \!+\! \frac{1}{2}\HH^{F}\!|_{\!\LL(a_0,b_0)}\Big[\big(\mathrm{d}\LL|_{(a_0,b_0)}(\Delta a, \Delta b)\big)^{\times 2}\Big]\!+\!\mathcal{O}\left(\|(\Delta a,\Delta b)\|^3\right)\\
&=f(p_0,\psi_0,r_0)+\Re \Big\langle \Phi_0, \Delta a\cdot b_0+ a_0\cdot \Delta b\Big\rangle+ \frac{1}{2}\HH^{F\circ\LL}|_{(a_0,b_0)}\Big[(\Delta a, \Delta b)^{\times 2}\Big]+\mathcal{O}\left(\|(\Delta a,\Delta b)\|^3\right)
\end{align*}
where $\HH^F$ is given in \eqref{HbiFG} and $\HH^{F\circ\LL}|_{(a_0,b_0)}$ is defined via  
\begin{multline*}
\HH^{F\circ\LL}|_{(a_0,b_0)}\Big[(\Delta a, \Delta b)^{\times 2}\Big]=\\\Re\Big\langle \nabla F|_{\LL(a_0,b_0)}, \mathrm{d}^2\LL|_{(a_0,b_0)}\big[(\Delta a, \Delta b)^{\times 2} \big]\Big\rangle + \HH^F|_{\LL(a_0,b_0)}\Big[\big(\mathrm{d}\LL|_{(a_0,b_0)}(\Delta a, \Delta b)\big)^{\times 2}\Big].
\end{multline*}
Upon inserting the concrete formulas for $\Delta a$ and $\Delta b$ we then see that the second order Taylor expansion of $f(p_0+\Delta p,\psi_0+\Delta \psi,r_0+\Delta r)=F\circ\LL\big(I_K(p_0+\Delta p),\SS(\psi_0+\Delta \psi,r_0+\Delta r)\big)$ becomes
\begin{align*}
    &f(p_0,\psi_0,r_0)+\Re \Big\langle \Phi_0, I_K(\Delta p)\cdot b_0+ a_0\cdot \Big(\mathrm{d}\SS|_{(\psi_0,r_0)}(\Delta \psi, \Delta r)+\frac{1}{2}\mathrm{d}^2\SS|_{(\psi_0,r_0)}\big[(\Delta \psi, \Delta r)^{\times 2}\big]\Big)\Big\rangle\\&\qquad + \frac{1}{2}\HH^{F\circ\LL}|_{(a_0,b_0)}\Big[ \big(I_K(\Delta p), \mathrm{d}\SS|_{(\psi_0,r_0)}(\Delta \psi, \Delta r)\big)^{\times 2}\Big]+\mathcal{O}\big(\|(\Delta p,\Delta \psi,\Delta r)\|^3\big)
\end{align*}
where we omitted the $\mathrm{d}^2\SS$-term from the second row since this part anyways gets absorbed by the ordo (as in the proof of Theorem \ref{t1}).

From the above expression we can now easily identify the gradient and the bilinear Hessian. Obviously, the gradients $\nabla_p f$ and $\nabla_\psi f$ are the same as in \eqref{nablaab} in Section \ref{man}, so we will not derive them again. Recalling \eqref{l04} and the expression for $\mathrm{d}\SS$, we have that the term that depends linearly on $\Delta r$ is 
\begin{align*}
&\Re \Big\langle \Phi_0, a_0\cdot\Big(  \mathrm{d}\TT_{\psi_0}|_{{{r}_{0,k}}}( \Delta {r}_k) \Big)_{k=1}^K\Big\rangle=\Re \Big\langle \Phi_0, I_K(p_0)\cdot\Big(  \mathrm{d}\TT_{\psi_0}|_{{{r}_{0,k}}}( \Delta {r}_k) \Big)_{k=1}^K\Big\rangle=\\& \Re \Big\langle \Phi_0, - 2\pi i \left( p_0\cdot \CC\Big[\mathcal{F}^{-1}\big(\langle\xi,  \Delta r\rangle e^{ -2 \pi i \langle {r}_{0,k}, {\xi}\rangle}\hat{\psi}_0(\xi)\big)\Big] \right)_{k=1}^K\Big\rangle    
\end{align*} 
where the operator $I_K$ disappears when we move $p_0$ inside the main parenthesis. From this expression it follows that $\nabla_{{r}} f|_{(p_0,\psi_0,r_0)} $ equals \begin{align*}
                              -2\pi \mathsf{Im} \Big(&\Big[ \Big\langle \Phi_{0,k}, p_0\cdot  \CC\Big(\mathcal{F}^{-1}\big[\xi_1 e^{ -2\pi i \langle{r}_{0,k},  {\xi}\rangle}\hat{\psi}_0(\xi)\big])\Big\rangle,\\&\Big\langle \Phi_{0,k},p_0\cdot \CC\Big(\mathcal{F}^{-1}\big[\xi_2 e^{ -2\pi i \langle{r}_{0,k},{\xi}\rangle}\hat{\psi}_0(\xi)\big]\Big) \Big\rangle\Big]\Big)_{k=1}^K,
                           \end{align*}
where we use $\Phi_{0,k}$ to denote the $k$-th slice of $\Phi_0$. Note that the only difference between the formula for the first and second coordinate is the swapping of $\xi_1$ for $\xi_2$.
Turning finally to the Hessian (on the ``diagonal'') we have 
\begin{equation*}\begin{aligned}
&\HH^f|_{ (p_0,\psi_0,r_0)}\big[(\Delta p,\Delta \psi,\Delta {r})^{\times 2}\big]=\\&\Re \Big\langle \Phi_0, \mathrm{d}^2\SS|_{(\psi_0,r_0)}\big[(\Delta \psi, \Delta r)^{\times 2}\big]\Big\rangle+ \HH^{F\circ\LL}|_{(a_0,b_0)}\Big[ \big(I_K(\Delta p), \mathrm{d}\SS|_{(\psi_0,r_0)}(\Delta \psi, \Delta r)\big)^{\times 2}\Big],
\end{aligned}
\end{equation*}
and the corresponding expression for the bilinear form immediately follows.
The expressions for the Hessian operator can be obtained following the computations outlined in Section \ref{sec:HesOp}, and are omitted here for sake of space.

\section*{Appendix II: Preconditioning for numerical stability}\label{prec}

When creating a large functional based on multiple variables, the variability of the functional can be much greater with respect to one variable, leading to numerical instabilities. 
To balance the situation, we propose rescaling the variables. Consider a functional $f$ of two high-dimensional variables $(x,y)$. We want to re-scale them as $\tilde x=x/\rho_x$ and $\tilde y=y/\rho_y$, so the new functional becomes $$\tilde{f}(\tilde{x},\tilde{y})=f(\rho_x \tilde x,\rho_y \tilde y).$$
We have that
\begin{align*}
 & \tilde f(\tilde{x}+\Delta \tilde{x},\tilde{{y}}+\Delta\tilde{{y}})=f(\rho_x\tilde{x}+\rho_x\Delta \tilde{x},\rho_y\tilde{{y}}+\rho_y\Delta\tilde{{y}})\\
 &\quad =f(\rho_x\tilde{x},\rho_y\tilde{{y}})+
 \mathsf{Re}\langle \nabla_x f|_{(\rho_x\tilde {x},\rho_y\tilde{{y}})},\rho_x \Delta \tilde x\rangle+\mathsf{Re}\langle \nabla_y f|_{(\rho_x\tilde{x},\rho_y\tilde{{y}})},\rho_y \Delta \tilde y\rangle\\
 &\qquad\qquad +\frac{1}{2}\HH^f|_{(\rho_x\tilde{{x}},\rho_y \tilde y)}\big[(\rho_x\Delta \tilde{x},\rho_y\Delta \tilde{{y}})^{\times 2}\big]+\mathcal{O}(\|(\Delta\tilde x,\Delta \tilde y)\|^3)
\end{align*}
from which it follows that 
\begin{align*}
  &\nabla_{\tilde{x}} \tilde f|_{(\tilde{x},\tilde{{y}})}=\rho_x\nabla_{x} f|_{(x,y)}, \qquad\nabla_{\tilde{y}} \tilde f|_{(\tilde{x},\tilde{{y}})}=\rho_y\nabla_{y} f|_{(x,y)}, \\
  &\HH^{\tilde{f}}|_{(\tilde{x},\tilde{{y}})}\big[(\Delta\tilde{x},\Delta\tilde{{y}})^{\times 2}\big]  =  \HH^{{f}}|_{({x},{{y}})}\big[(\rho_x \Delta\tilde{x},\rho_y \Delta\tilde{{y}})^{\times 2}\big], \\
  &H^{\tilde{f}}|_{(\tilde{x},\tilde{{y}})}(\Delta\tilde{x}, \Delta\tilde{{y}})=\mathrm{diag}_{(\rho_x,\rho_y)}H^f|_{({x},{{y}})}(\rho_x \Delta\tilde{x},\rho_y \Delta\tilde{{y}}),
\end{align*}
where $\mathrm{diag}_{(\rho_x,\rho_y)}(u,v)=(\rho_x u, \rho_y v)$.

To avoid introducing new variables in practice (i.e.~in the computer code), let us now work out the steps for optimizing $\tilde f$, translated to the original coordinates, starting with BH-CG. Given a point $(\tilde x^{(j)},\tilde y^{(j)})$ and old search directions $\tilde \eta^{(j-1)}= (\tilde \eta_{\tilde{x}}^{(j-1)},~\tilde \eta_{\tilde{y}}^{(j-1)})$, new search directions 
are given by 
$$\tilde \eta^{(j)}_{\tilde{z}}=-\nabla_{\tilde{z}} \tilde f|_{(\tilde{x}^{(j)},\tilde{{y}}^{(j)})}+\frac{\HH^{\tilde f}|_{(\tilde{x}^{(j)},\tilde{{y}}^{(j)})}(\nabla \tilde f|_{(\tilde{x}^{(j)},\tilde{{y}}^{(j)})},\tilde \eta^{(j-1)})}{\HH^{\tilde f}|_{(\tilde{x}^{(j)},\tilde{{y}}^{(j)})}( \tilde \eta^{(j-1)},\tilde \eta^{(j-1)})} \tilde \eta^{(j-1)}_{\tilde{z}},$$  where $z$ stands for either $x$ or $y$.
The updates for new variables are calculated as
\begin{equation}
\tilde{x}^{(j+1)}=\tilde{x}^{(j)}-\tilde{\alpha}^{(j)}\tilde{\eta}_x^{(j)},\qquad
\tilde{y}^{(j+1)}=\tilde{y}^{(j)}-\tilde{\alpha}^{(j)}\tilde{\eta}_y^{(j)}
\end{equation}
To work instead with original variables we multiply both sides by $\rho$,
\begin{equation}
x^{(j+1)}={x}^{(j)}-\rho_x\tilde{\alpha}^{(j)}\tilde{\eta}_x^{(j)},\qquad y^{(j+1)}={y}^{(j)}-\rho_y\tilde{\alpha}^{(j)}\tilde{\eta}_y^{(j)}
\end{equation}
and introduce $\eta_z^{(j)}$ by setting $\tilde \eta_{\tilde{{z}}}^{(j)} = \eta_z^{(j)}/\rho_z$ so that it scales just as $x$ and $y$. 
Then
\begin{equation}
\begin{aligned}
 &\eta^{(j)}_z=-\rho_z^2\nabla_z f|_{({x}^{(j)},{{y}}^{(j)})}\\ &+\frac{\HH^{ f}|_{({x}^{(j)},{{y}}^{(j)})}\Big[(\nabla_{{{{x}}}} \rho_x^2 f|_{({x}^{(j)},{{y}}^{(j)})},\rho_y^2\nabla_{{{{y}}}} f|_{({x}^{(j)},{{y}}^{(j)})}), ( \eta^{(j-1)}_{{{x}}},\eta^{(j-1)}_{{{y}}})\Big]}{\HH^{ f}|_{({x}^{(j)},{{y}}^{(j)})}\Big[(\eta^{(j-1)}_{{{x}}}, \eta^{(j-1)}_{{{y}}}),(\eta^{(j-1)}_{{{x}}}, \eta^{(j-1)}_{{{y}}})\Big]} \eta^{(j-1)}_{{{z}}}.  
 \end{aligned}
\end{equation}
Armed with this, we now want to minimize $$\alpha \mapsto \tilde f(\tilde x^{(j)},\tilde y^{(j)})-\alpha\Re\left\langle  \nabla \tilde f|_{(\tilde x^{(j)},\tilde y^{(j)})},\tilde \eta^{(j)}\right\rangle +\frac{\alpha^2}{2}\HH^{\tilde f}|_{(\tilde{x}^{(j)},\tilde{{y}}^{(j)})}\big[ \tilde \eta^{(j)},\tilde \eta^{(j)}\big]$$
which analogously yields \begin{equation}\label{sl}
\begin{aligned}
    \alpha^{(j)}=\frac{\Big\langle(\nabla_{{{{x}}}} f|_{({x}^{(j)},{{y}}^{(j)})},\nabla_{{{{y}}}} f|_{({x}^{(j)},{{y}}^{(j)})}), ( \eta^{(j)}_{{{x}}}, \eta^{(j)}_{{{y}}})\Big\rangle}{\HH^{ f}|_{({x}^{(j)},{{y}}^{(j)})}\Big[( \eta^{(j)}_{{{x}}}, \eta^{(j)}_{{{y}}}),( \eta^{(j)}_{{{x}}}, \eta^{(j)}_{{{y}}})\Big]}.
    \end{aligned}
\end{equation} 
We can now update the old variables according to $x^{(j+1)}=x^{(j)}-\alpha^{(j)} \eta_x^{(j)}$ and analogously for $y$, i.e.~we retrieve \eqref{eqalpha} unchanged.

For BH-GD we analogously apply the above formula for the steplength, setting $\eta_z^{(j)}=-\rho_z^2\nabla_{z} f|_{(x^{(j)},y^{(j)})}$, whereas for Quasi-Newton (BH-QN), we get the new search direction by approximately solving the equation system $H^{\tilde{f}}|_{(\tilde{x}^{(j)},\tilde{y}^{(j)})}\left(\tilde{\eta}^{(j)}_{\tilde{x}},\tilde{\eta}^{(j)}_{\tilde{y}}\right)=-\nabla\tilde{f}|_{(\tilde{x}^{(j)},\tilde{y}^{(j)})}$
which in the original coordinates becomes 
$$\mathrm{diag}_{(\rho_x^2,\rho_y^2)}H^{{f}}|_{({x}^{(j)},{y}^{(j)})}\left({\eta}^{(j)}_{{x}},{\eta}^{(j)}_{{y}}\right)=-\Big(\rho_x^2 \nabla_{{x}}{f}|_{({x}^{(j)},{y}^{(j)})},\rho_y^2 \nabla_{{y}}{f}|_{({x}^{(j)},{y}^{(j)})}\Big).$$

\bibliography{refs}

\end{document}